\newtheorem{definition}{Definition}
\newtheorem{lemma}{Lemma}
\newtheorem{corollary}{Corollary}
\newcommand{\Cost}{\mathrm{Cost}}
\newcommand{\Trunc}{\mathrm{Trunc}}
\newcommand{\sig}{\mathrm{sig}}
\newcommand{\depth}{\mathrm{depth}}
\newcommand{\Expand}{\mathrm{Expand}}
\newcommand{\OPT}{\mathrm{OPT}}
\newcommand{\Pred}{\mathrm{Pred}}
\newcommand{\CALI}{{\cal I}}
\newcommand{\CALS}{{\cal S}}
\newenvironment{Note}
{\em \small
Note: 
}
\DeclareMathOperator*{\argmin}{\arg\!\min}
\title{Speeding up the AIFV-$2$ dynamic programs by two orders of magnitude using Range Minimum Queries}
\author[HKUST1]{Mordecai Golin \fnref{fn1}}
\address[HKUST1]{Hong Kong  UST.  golin@cse.ust.hk}
\author[HKUST2]{Elfarouk Harb}
\address[HKUST2]{Hong Kong  UST.  eyfmharb@connect.ust.hk}
\date{}
\begin{document}

\begin{abstract}
AIFV-$2$ codes are a new method for constructing lossless codes for memoryless sources that provide better worst-case  redundancy than Huffman codes.  
They do this by using two code trees instead  of one and also allowing some bounded delay in the decoding process.  
Known algorithms for constructing AIFV-code are iterative;  at each step they  replace the current code tree pair with  a ``better'' one.  The current state of the art for performing this replacement is a pair of  Dynamic Programming (DP) algorithms that use $O(n^5)$ time   to fill in two tables, each of size $O(n^3)$ (where $n$  is the  number of different characters in the  source).

This paper describes how to reduce the time for filling in the DP tables by two orders of magnitude,  down to $O(n^3)$. It does this by introducing a grouping technique that permits separating the $\Theta(n^3)$-space tables into $\Theta(n)$ groups, each of size $O(n^2)$, and then using Two-Dimensional Range-Minimum Queries (RMQs) to fill in that group's  table entries in $O(n^2)$ time.  This RMQ speedup technique seems to be new and might be of independent interest.

\end{abstract}

%keywords
% Tree Partitioning
% Sink Evacuation
% MInmax-Regret
% $k$-Center
% Confluent Flows

\begin{keyword}
AIFV Codes \sep
Dynamic Programming Speedups   \sep
Range Minimum Queries
% Minmax-Regret   \sep
%$k$-Center   \sep
%Confluent Flows
\end{keyword}

\maketitle
% Introduction part including overview of the paper, past work and motivation
\section{Introduction}
\label{sec:Intro}
%\section{Introduction}
%\label{sec:Intro}

Almost Instantaneous Fixed to Variable-$2$ (AIVF-$2$) codes were introduced recently in a series of papers \cite{introduceMAIFV,dp,original,yamamoto2015almost}

Similar to Huffman Codes, these provide lossless encoding for a fixed probabilistic memoryless  source.  They differ from Huffman codes in that they use a {\em pair} of  coding trees instead of just   one tree, sometimes coding using the first and sometimes using the second.   They also no longer provide {\em instantaneous} decoding.  Instead, decoding might require a bounded delay.  That is,  it might be necessary to read up to $2$ extra characters after a codeword ends before certifying the the completion (and decoding) of the codeword.
 The advantage of AIFV-$2$ codes over Huffman codes is that they guarantee  redundancy of at most $1/2$   instead of the redundancy of  $1$ guaranteed by Huffman encoding \cite{{introduceMAIFV}}.

The procedure for constructing optimal (min-redundancy) 
AIFV-$2$ codes is much more complicated than that of finding Huffman codes.
It is an iterative one that, at each step,  replaces the current pair of coding trees by  a new, better, pair.  The original  paper   \cite{yamamoto2015almost} only proved  that its iterative algorithm terminated.  This was improved to polynomial time  steps  by 
\cite{golin2019polynomial},  which used  only $O(b)$ iterations, where $b$ is the maximum number of bits used to encode one of the input source probabilities.

Each iterative step of  \cite{yamamoto2015almost}'s  algorithm was  originally implemented using an exponential time Integer Linear Program.  This was later improved by  \cite{dp}  to  $O(n^5)$ time, using  Dynamic Programming (DP) to replace the ILP.  $n$ is the number of different characters in the original souce.

The purpose of this paper is to show how the DP method   can be sped up to $O(n^3)$ time. Combined with \cite{golin2019polynomial},  this yields a $O(n^3 b)$ time algorithm for constructing AIFV-$2$ codes.

Historically, there have been two  major approaches to speeding up DPs. The first is  the Knuth-Yao Quadrangle-Inequality method \cite{Knuth1971,Wachs1989,Yao1980,Yao1982}. The second is the use of ``monotonicity'' or the ``Monge Property'' and   the  application of  the SMAWK \cite{Aggarwal1987} algorithm \cite[Section 3.8]{burkard1996perspectives} (\cite{Woeginger2000} provides a good example of this approach).  There are also variations, e.g., \cite{Eppstein1988}, that while not exactly one or the other, share many of their properties.
\cite{Du2013} provides a recent  overview of the techniques available.

Both methods  improve running times by ``grouping'' calculations.  More specifically,   they all essentially fill in  a DP  table of size $\Theta(n^k)$,  for some $k$, in which calculating an individual table entry requires $\Theta(n)$ work.  Thus, a-priori,   filling in the table seems to require $\Theta(n^{k+1})$ time.  The speedups  work by grouping the entries in sets of size $\Theta(n)$ and calculating all entries in the group in $\Theta(n)$ time.  The Quadrangle-Inequality approach does this via amortization while the SMAWK approach does this by a transformation into another problem (matrix row-minima calculation).  Both approaches lead to a $\Theta(n)$ speedup, permitting filling in the table in an optimal $\Theta(n^k)$ time.

Both DPs   in  \cite{dp} have  $O(n^3)$ size tables with each entry requiring $\Theta(n^2)$ individual evaluation time, leading to the $O(n^5)$ time algorithms.   The main contribution of this paper is the development of  new grouping techniques that permit speeding up the DPs by $\Theta(n^2),$  decreasing the running times to $O(n^3).$ 

More specifically, the table entries are now  partitioned into $\Theta(n)$ groups,  each containing  $\Theta(n^2)$ entries.   For each group,  a  $\Theta(n) \times \Theta(n)$ sized  rectangular  matrix  $M$ is then built; calculating the value of each table entry in the group is shown to be equivalent to performing  a Two-Dimensional Range Minimum  (2D RMQ)  query   on $M$ (along with $O(1)$ extra work). Known results \cite{yuan2010data} on 2D RMQ queries  imply that %after $O(n^2)$ preprocessing time of $M,$ a RMQ  can be implemented in only $O(1)$ time.  
$O(n^2)$  queries can be inplemented using a total of  $O(n^2)$  time.
 Thus all entries in  each group of size $\Theta(n^2)$ can be evaluated in $O(n^2)$ time, leading to an $O(n^3)$ time algorithm.

To the best of our knowledge this is the first time 2D RMQs have been used for speeding up Dynamic Programming in this fashion,  so this technique might be of independent interest.

\Cref{sec:RMQ} quickly reviews known facts about 2D RMQs.  It also introduces the two specialized versions of RMQs that will be needed and shows that they can be solved even more simply (practically) than standard RMQs.
\Cref{sec:The DP}  is the main result of the paper.  It states (before derivation) the two DPs of interest and then describes the new technique  to reduce their evaluation from $\Theta(n^5)$  to $\Theta(n^3).$ 
The remainder of the paper then provides the backstory.  \Cref {sec:The Codes}
 defines the motivating AIVF-$2$ problem and  the technique for solving it.  Finally,  \Cref{sec:Derivation} describes the derivation of the AIFV-$2$ DPs that were solved in \Cref{sec:The DP}.  We emphasize that  while these  DPs are not {\em exactly} the ones  introduced in \cite{dp} they are very similar and  were derived using the same observations and basic tools (the top-down signature technique of \cite{golin,chan2000dynamic}).   The derivation of these new DPs was necessary, though.  Their slightly different structure  is what permits successfully applying the 2D RMQ technique to them

We conclude by noting that AIFV-$2$ codes were later extended to AIFV-$m$ codes by  \cite{introduceMAIFV}. These replace the pair of coding trees by an $m$-tuple. 
% and can achieve $\frac 1 m $ redundancy if permitted an  $m$  bit decoding delay.
The iterative algorithms for constructing these codes use $O(n^{2m+1})$ time DP algorithms that fill in size $O(n^{m+1})$ DP tables  as subroutines.  An interesting direction for future work is whether it is possible to reduce the running times of  evaluating those DP tables by a factor of  $\Theta(n^m)$  via the use of the corresponding  $m$D RMQ algorithms from \cite{yuan2010data}. This would require  a much better understanding of the structure of those DPs in   \cite{introduceMAIFV}  than currently exist.

\section{Range Minimum Queries}
\label{sec:RMQ}
%\section{Range Minimum Queries}
%\label{sec:RMQ}
%\input{RMQ.tex}

As, mentioned, the speedup  in evaluating the DPs will result from 
grouping  and  then  using  Range Minimum Queries (RMQs). This section quickly reviews facts about  RMQs  for later use.
\begin{definition}[2D RMQ]
Let $M= (M_{i,j})$ be a given $m \times n$  matrix;  $0 \le i \le m,$  $0 \le j \le n$.   The {\em two-dimensional range minimum query (2D RMQ)}  problem is, for $0 \le a_1 \le a_2 \le n$ and $0 \le b_1 \le b_2 \le n$ to return the value 
$$RMQ(M:a_1,a_2, b_1,b_2) \triangleq  
\min\{M_{i,j} \,:\, a_1 \le i \le a_2,\,  b_1 \le j \le b_2\}.
$$
and indices   $i',j'$,  $a_1 \le i' \le a_2,$  $ b_1 \le j' \le b_2$ such that
$$M_{i',j'} = RMQ(M:a_1,a_2, b_1,b_2).
%\min\{M_{i,j} \,:\, a_1 \le i \le a_2,\,  b_1 \le j \le b_2\}.
$$
\end{definition}

This can  be solved using 
\begin{lemma} [\cite{yuan2010data}]
Let $M= (M_{i,j})$  be a given $m \times n$  matrix;  $0 \le i \le m,$  $0 \le j \le n$. There is an $O(m n)$ time algorithm to preprocess  $M$ that permits 
answering any subsequent 2D RMQ query in $O(1)$ time.
\end{lemma}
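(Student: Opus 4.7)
The plan is to extend the 1D Bender--Farach-Colton result ($O(n)$ preprocessing, $O(1)$ query for 1D RMQ) to two dimensions by block decomposition. I take the 1D RMQ construction (Cartesian trees reducing to $\pm 1$-RMQ on an Euler tour) as a black-box building block that will be applied along rows, along columns, and on an aggregated coarse matrix.

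Partition $M$ into blocks of side length $s = \Theta(\log(m+n))$ and compute each block's minimum in $O(mn)$ total time, yielding a coarser matrix $M'$ of dimensions roughly $(m/s) \times (n/s)$. On $M'$, build a two-dimensional sparse table indexed by power-of-two side lengths in each dimension: although such a sparse table has $\Theta(\log^2(m+n))$ layers, there are only $O(mn/s^2)$ entries per layer, so with the choice $s = \Theta(\log(m+n))$ the total preprocessing for this coarse structure is $O(mn)$. In parallel, build a 1D RMQ structure on every row and every column of $M$, also in $O(mn)$ time in total. To answer a query on a rectangle $[a_1,a_2]\times[b_1,b_2]$, decompose it into a block-aligned central sub-rectangle (whose minimum is recovered in $O(1)$ from the sparse table on $M'$ via the standard four-overlapping-rectangles trick) plus at most four fringe strips of thickness less than $s$ along the boundary.

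The main obstacle is answering each fringe strip in $O(1)$ without exceeding $O(mn)$ total preprocessing. Each such strip has one dimension bounded by $s - 1 = O(\log(m+n))$; for these thin strips I would add a second layer of precomputation that, for every row and every starting column, tabulates the minimum over each admissible height up to $s$, and symmetrically for columns. The delicate point is that a naive implementation of this second layer either blows the space past $O(mn)$ or reintroduces an $O(\log\log)$ factor into the query time, so one has to combine these strip tables with the existing row/column 1D RMQ structures, and with a further recursive block decomposition inside each strip, to reach the desired bound. This layered bookkeeping is exactly the technical content of \cite{yuan2010data}, and I would follow their construction to verify the constants and confirm the final $O(mn)$/$O(1)$ tradeoff.
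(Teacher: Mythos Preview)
The paper does not prove this lemma at all: it is stated purely as a citation to \cite{yuan2010data} and immediately used as a black box. There is therefore nothing in the paper to compare your sketch against. Your outline is in the spirit of the Yuan--Atallah construction (block decomposition, a sparse table on the coarse matrix, and per-row/per-column 1D RMQ structures), and you correctly identify the fringe strips as the delicate part; but since you ultimately defer the resolution of that step back to \cite{yuan2010data}, your proposal amounts to the same thing the paper does, namely invoking the reference.

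It is worth noting that the paper explicitly remarks, right after stating this lemma, that the full \cite{yuan2010data} machinery is not actually needed: all of the RMQ queries that arise later are of the restricted $RCQ$ and $RRMQ$ forms of \Cref{def:RRMQ}, for which \Cref{lem:RRMQ} gives a completely elementary $O(mn)$ precomputation by two simple dynamic-programming sweeps. So for the purposes of this paper, no proof of the general 2D RMQ lemma is required.
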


While theoretically optimal, the algorithm in \cite{yuan2010data} is quite complicated.  To make the speed up more practical to implement, we note in advance that all of the RMQ queries used later  will  be one of the two following specialized types:  
\begin{definition}
\label{def:RRMQ}
Let $M= (M_{i,j})$ be a given $m \times n$  matrix;  $0 \le i \le m,$  $0 \le j \le n$.  Let   $0 \le a\le m,$  $0 \le b \le n$. See \Cref{fig:RRMQ}. 
\begin{itemize}
\item Define a {\em restricted column query} as 
$$RCQ(M:a,b) \triangleq   RMQ(M:a, m,b,b).$$
\item Define a{ \em  restricted RMQ  query}  as 
$$RRMQ(M:a,b) \triangleq   RMQ(M:a, m,0,b).$$
\end{itemize}
\end{definition}

\begin{figure}
\centerline{\includegraphics[width=2.5in]{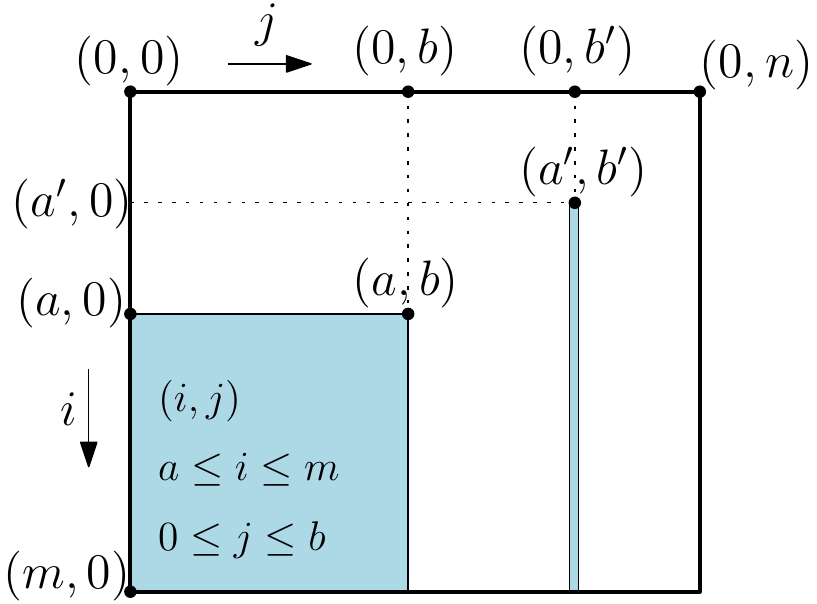}}
\caption{Illustration of \Cref{def:RRMQ}. 
$M$ is an $(m+1) \times (n+1)$ matrix. 
$RCQ(M:a',b')$ is the minimum of the entries in the long thin blue column descending down from $(a',b').$
$RRMQ(M:a,b)$ is the minimum of the entries in the blue rectangle with upper-right corner $(a,b)$.
}\label{fig:RRMQ}
\end{figure}

Directly from the definition,
$$
\forall b,\quad RCQ(M:a,b) = 
\begin{cases}
M_{m,b} & \mbox{if $a=m$},\\
\min\left( M_{a,b}, RCQ(M:a+1,b) \right) &  \mbox{if $a<m$}.
\end{cases}
$$
Thus,  the values of all of the $\Theta(mn)$  possible $RCQ(M:a,b)$ queries (and the associated indices at which minimization occurs) can be easily calculated in $\Theta(mn)$ time.  

Also directly from the definitions, 
$$
 RRMQ(M:a,b) = 
\begin{cases}
RCQ(M:a,0) & \mbox{if $b=0$},\\
\min\left( RRMQ(M:a,b-1), RCQ(M:a,b) \right) &  \mbox{if $b >0$.}\\
\end{cases}
$$
Thus,  assuming that all of the $RCQ(M:a,b)$ have been precalculated,   the values of all of the $\Theta(mn)$  possible $RRMQ(M:a,b)$ queries (and the associated indices at which minimization occurs) can also be easily calculated in $\Theta(mn)$ time.  

For later use we collect this in a lemma.
\begin{lemma} 
\label{lem:RRMQ}
Let $M $ be a given $m \times n$  matrix;  $0 \le i \le m,$  $0 \le j \le n$. There is an $O(m n)$ time algorithm that  calculates the answers to all of the possible
$ RCQ(M:a,b)$ and $RRMQ(M:a,b)$ queries.
\end{lemma}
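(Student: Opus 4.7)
The plan is to read the lemma directly off of the two displayed recurrences preceding it: both $RCQ$ and $RRMQ$ satisfy $O(1)$-evaluable recurrences, and each table has $\Theta(mn)$ entries, so a standard bottom-up dynamic programming fill suffices. I would present the proof as two passes over an $(m{+}1)\times(n{+}1)$ table, and briefly argue both correctness (via the recurrence) and the $O(mn)$ running time (via the recurrence cost and table size).

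First, I would handle the $RCQ$ table. By definition $RCQ(M:a,b)$ depends only on $M_{a,b}$ and $RCQ(M:a{+}1,b)$, so I would iterate $b$ from $0$ to $n$ in the outer loop and $a$ from $m$ down to $0$ in the inner loop (the two loops can of course be swapped), filling each entry in $O(1)$ time from the displayed recurrence. Alongside each value I would also store the index $i^\star(a,b) \in \{a,\ldots,m\}$ that attains the minimum: at $a=m$ set $i^\star=m$, and at $a<m$ set $i^\star(a,b)=a$ if $M_{a,b}\le RCQ(M:a{+}1,b)$ and $i^\star(a,b)=i^\star(a{+}1,b)$ otherwise. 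Correctness is an immediate induction on $m-a$, and the total cost is $O(mn)$.

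Second, with the $RCQ$ table in hand I would fill the $RRMQ$ table. Here $RRMQ(M:a,b)$ depends only on $RRMQ(M:a,b{-}1)$ and $RCQ(M:a,b)$, so I would iterate $a$ in the outer loop and $b$ from $0$ to $n$ in the inner loop, again filling each entry in $O(1)$ time from the second displayed recurrence. The witnessing pair $(i',j')$ is carried along in the obvious way: at $b=0$ the witness is $(i^\star(a,0),0)$; at $b>0$, compare the two arguments of the $\min$ and keep the witness from whichever side is no larger, using the already-stored witness of $RCQ(M:a,b)$ when the second argument wins. Another simple induction (on $b$) shows correctness, and the total cost is again $O(mn)$.

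Combined, the two passes run in $O(mn)$ time and produce, for every $(a,b)$, both $RCQ(M:a,b)$ and $RRMQ(M:a,b)$ together with an index pair attaining each minimum. There is no real obstacle here; the only thing worth being careful about is the iteration order so that each recurrence refers only to already-computed entries, and maintaining the witness indices in parallel with the values so that the ``and the associated indices at which minimization occurs'' clause of the preceding discussion is delivered as part of the same $O(mn)$ bound.
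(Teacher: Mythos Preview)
Your proposal is correct and follows exactly the paper's approach: the paper simply observes the two displayed recurrences for $RCQ$ and $RRMQ$, notes that each of the $\Theta(mn)$ entries (and the minimizing indices) can be computed in $O(1)$ time given the appropriate evaluation order, and then ``collects this in a lemma'' without further argument. Your two-pass dynamic programming fill is precisely what the paper intends, with the added care of making the iteration order and witness-tracking explicit.
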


\section{The Dynamic Program and its speedup}
\label{sec:The DP}
%\section{The Dynamic Program and its speedup}
%\label{sec:The DP}
%\input{DP_and_Speedup.tex}

\begin{definition}
\label{def:Wdef}
Let $p_1,\ldots,p_n$ be given such that $\forall i,\, p_i >0$ and $\sum_{i=1}^n p_i =1.$
Set
$$W_m \triangleq  \sum_{j \le  m} p_j,
\quad\mbox{and}\quad
W'_m \triangleq  \sum_{j > m} p_j = 1 - W_m
$$
and, for $ m' < m$,
$$W_{m',m}\triangleq  \sum_{m'  < j \le m} p_j = W_m - W_{m'}.
$$
\end{definition}
The algorithm  precalculates and stores all of the $W_m$ in $O(n)$ time. Subsequently,  the $W_m,$ $W'_m$ and $W_{m',m}$ can all be calculated in $O(1)$ time.

The Dynamic  Programs are defined on $O(n^3)$ size tables that are indexed by {\em Signatures}.  The next two definitions define the Signature set (of indices) and the Dynamic Programming recurrence imposed on them.

\begin{definition}[The Signature Set and costs]\ Let  $C$  ($ 0\le C \le 1$)  be fixed.
\label{def:Exp1}
\begin{itemize}
\item 
Define
$$\CALS_n \triangleq 
\left\{ (m;p;z) \,:\, \mbox{$0 \le z \le m \le n$  and $0 \le p \le n$}
\right\}
$$
to be the {\em signature set} for the problem of size $n.$

\item 
Let  $(m';p';z') \not= (m,p,z) \in \CALS_n$. \\
We say $(m';p';z') $ {\em can be expanded into}  $(m;p;z)$, denoted by
$$(m';p';z')  \rightarrow  (m,p,z),
$$
 if there exists $e_0,e_1$  satisfying 
  \begin{equation}
 \label{eq:pdefe}
 e_0,\,e_1 \ge 0 \quad \mbox{ \rm such that }\quad  0  \le e_0 + e_1 \le p'
 \end{equation} 
 and
  \begin{eqnarray}
 m  &=& m'+e_0+e_1, \label{eq:pdefm}\\
 z   &=& e_1,\label{eq:pdefz}\\
 p &=&  z' + 2(p'-e_0-e_1). \label{eq:pdefp}
 \end{eqnarray}
 \item For $\alpha \in \CALS_n$, define the immediate predecessor set  of $\alpha$ to be
 $$
P(\alpha) \triangleq   \{\alpha' \in  {\cal S}_n \,:\, \alpha' \rightarrow \alpha\}.
$$
\item  Let   $\alpha_1,\alpha_2 \in \CALS_n.$  We say that $\alpha_1$  {\em leads to} $\alpha_2$, denoted by  
$\alpha_1 \rightsquigarrow \alpha_2$,  if there exists a path from $\alpha_1$ to $\alpha_2$ using ``$\rightarrow$''. 
\item Let $I \subset \CALS_n$ and $\alpha \in \CALS_n$.  We say that $I \rightsquigarrow  \alpha$ if $\alpha \not\in I$ and there exists $\alpha' \in I$ such that
$\alpha' \rightsquigarrow  \alpha$.

 \item  Let $\alpha' =(m';p';z')$ and  $\alpha=  (m,p;z)$ where
 $\alpha' \rightarrow \alpha$.  The associated {\em expansion costs} are
 \begin{eqnarray*}
 c_0(\alpha',\,  \alpha) &\triangleq & W'_{m'} + C W_{m'-z',m'},\\
 c_1(\alpha',\, \alpha) &\triangleq & W'_{m'} - C W_{m',m-z}.\\
 \end{eqnarray*}
 \end{itemize}
\end{definition}

The two  dynamic programs  used in the construction of AIFV-$2$ codes are given in the next definition.
\begin{definition}[The $\OPT_s(\alpha)$ tables] 
\label{def:OPT Tables}
\ 
\begin{itemize}
\item 
Let $I_0\subset \CALS_n$ be a given {\em initial  set}  (independent of $n$)  for the $\OPT_0$ table with  known values  $\bar c_0(\alpha)$ for $\alpha \in I_0.$ Now define
$$
 \OPT_0(\alpha)
= 
\begin{cases}
\bar c_0(\alpha)  & \mbox{if  $\alpha \in I_0$}\\
%\min_{\alpha'\,:\, \alpha'\rightarrow  \alpha} 
\min_{\alpha' \in P(\alpha)} 
 \{\OPT_0(\alpha') + c_0(\alpha',\alpha)\} & \mbox{if $I_0  \rightsquigarrow   \alpha$}\\
\infty  & \mbox{otherwise}
\end{cases}
$$
\item 
Let $I_1 \subset \CALS_n$ be a given {\em initial  set}  (independent of $n$)  for the $\OPT_1$ table with  known values  $\bar c_1(\alpha)$ for $\alpha \in I_1.$ Now define
$$
 \OPT_1(\alpha)
= 
\begin{cases}
\bar c_1(\alpha)  & \mbox{if  $\alpha \in I_1$}\\
%\min_{\alpha'\,:\, \alpha'\rightarrow  \alpha} 
\min_{\alpha' \in P(\alpha)} 
 \{\OPT_1(\alpha') + c_1(\alpha',\alpha)\} & \mbox{if $I_1  \rightsquigarrow   \alpha$}\\
\infty  & \mbox{otherwise}
\end{cases}
$$
\item  Furthermore,  for  $s\in \{0,1\},$ for $\alpha \not\in I_s$   with $I_s  \rightsquigarrow   \alpha$, set 
$$
\Pred_s(\alpha) \triangleq  \argmin_{\alpha' \in P(\alpha)} 
\{\OPT_s(\alpha') + c_s(\alpha',\alpha)\}  
$$
\end{itemize}
The $\bar c_s(\alpha)$ for $\alpha \in I_s$ are the initial conditions for the corresponding dynamic programs.  %Note that the values of these $\bar c_s(\alpha)$ may be  dependent upon $n.$
\end{definition}

For intuition, 
let $G_s(n)$ be the directed graph with vertices $\alpha\in\CALS_n$   with the cost of  edge $(\alpha',\,  \alpha)$  being  the expansion cost
 $c_s(\alpha',\,  \alpha)$ except  that edges from $(0;0;0)$ to $\alpha \in I_s$  have cost 
$\bar c_s(\alpha)$ and edges that are not expansions have costs set to $\infty.$  Then $ \OPT_s(\alpha)$ is just the cost of the shortest path from $(0;0;0)$ to $\alpha$ in $G_s(n).$  The actual path could be found by following the $\Pred_s(\alpha)$ pointers backward from $\alpha$. By definition,  the expansion costs $c_s(\alpha',\,  \alpha)$ are all  non-negative, so the $\OPT_s(\alpha)$ values are all well-defined.

The next set of lemmas will imply that  $G_s(n)$ is a Directed Acyclic Graph so the recurrences define a Dynamic Program.
They will also suggest an efficient grouping mechanism, leading to fast evaluation.

\begin{lemma}
\label{lem:2Deqiv}
Let 
$ (m';p';z'), (m;z;p) \in \mathcal{S}_n.$  Then 
$$(m';p';z')  \rightarrow  (m,z,p)$$
if and only if all of 
\begin{eqnarray}
2m' + 2p' + z'  &=& 2 m +p, \label{eq:2Deqiva}\\
m'+p' &\ge& m, \label{eq:2Deqivb}\\
m' &\le& m-z, \label{eq:2Deqivc}\\
(p',z') &\not=& (0,0),  \label{eq:2Deqivd}
\end{eqnarray}
are satisfied.
\end{lemma}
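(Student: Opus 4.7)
The plan is to prove both directions by explicitly identifying the witnesses $e_0,e_1$ in the expansion relation. Since \eqref{eq:pdefm} and \eqref{eq:pdefz} uniquely determine those witnesses from $m,m',z$ --- namely $e_1 = z$ and $e_0 = m-m'-z$ --- the translation between ``$\rightarrow$'' and the conditions \eqref{eq:2Deqiva}--\eqref{eq:2Deqivd} is essentially forced, and the proof becomes routine algebraic bookkeeping.

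For the forward direction, given an expansion with witnesses $(e_0,e_1)$, I would read off their values as above, substitute into \eqref{eq:pdefp} and rearrange to obtain \eqref{eq:2Deqiva}, convert $e_0 \ge 0$ into \eqref{eq:2Deqivc}, and convert $e_0+e_1 \le p'$ into \eqref{eq:2Deqivb}. Condition \eqref{eq:2Deqivd} then follows from the implicit distinctness built into ``$\rightarrow$'': if $(p',z')=(0,0)$ then $0 \le e_0+e_1 \le p' = 0$ forces $e_0=e_1=0$, which yields $(m;p;z)=(m';p';z')$ and contradicts the distinctness requirement of \Cref{def:Exp1}.

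For the reverse direction, I would \emph{define} $e_0 := m-m'-z$ and $e_1 := z$. Non-negativity of $e_0$ and $e_1$, and the bound $e_0+e_1 \le p'$, then follow immediately from \eqref{eq:2Deqivc}, the signature-set constraint $z \ge 0$, and \eqref{eq:2Deqivb}, respectively. Equations \eqref{eq:pdefm} and \eqref{eq:pdefz} hold by construction, and \eqref{eq:pdefp} is just a rearrangement of \eqref{eq:2Deqiva}. The final distinctness claim follows because equality of the two signatures, together with \eqref{eq:2Deqiva}, would give $p'+z'=0$, which by non-negativity of $p',z'$ contradicts \eqref{eq:2Deqivd}.

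I do not foresee any serious obstacle: the proof is a direct algebraic verification in which the assignment $(e_0,e_1) \leftrightarrow (m-m'-z,\,z)$ bijects the defining data of ``$\rightarrow$'' with conditions \eqref{eq:2Deqiva}--\eqref{eq:2Deqivc}, while \eqref{eq:2Deqivd} exists solely to exclude the unique self-loop case. The real content of the lemma, which I expect the paper to exploit immediately afterwards, is that \eqref{eq:2Deqiva} is a \emph{linear} equation in the six signature coordinates: it slices predecessor pairs into $\Theta(n)$ cosets of size $\Theta(n^2)$, which is exactly the grouping needed for the 2D RMQ speedup advertised in the introduction.
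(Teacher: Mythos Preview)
Your proposal is correct and follows essentially the same route as the paper: both directions hinge on the forced choice $e_1=z$, $e_0=m-m'-z$, after which \eqref{eq:2Deqiva}--\eqref{eq:2Deqivc} correspond exactly to \eqref{eq:pdefp}, \eqref{eq:pdefe} (the upper bound), and $e_0\ge 0$, while \eqref{eq:2Deqivd} handles the distinctness clause. Your distinctness argument in the reverse direction (deducing $p'+z'=0$ directly from \eqref{eq:2Deqiva}) is in fact a shade cleaner than the paper's, which threads through \eqref{eq:pdefp} and \eqref{eq:pdefz} to reach the same contradiction.
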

\begin{proof}
First assume that $(m';p';z')  \rightarrow  (m,p,z)$. 

Let $e_0,e_1$ be the  unique pair that satisfies 
(\ref{eq:pdefe})-(\ref{eq:pdefp}). 
Then (\ref{eq:2Deqiva}) follows from
\begin{eqnarray*}
2m' + 2p'   &=&  2(m - e_0-e_1) + (p -z' + 2 e_0 +2 e_1)\\
			&=& 2m + p - z';
\end{eqnarray*}
(\ref{eq:2Deqivb})  follows from 
$$
m'+p' \ge m' + e_0 + e_1 = m;
$$
(\ref{eq:2Deqivc})  follows from 
$$
m- z = m- e_1 = m'-e_0 \ge m'.
$$
(\ref{eq:2Deqivd})  follows from the fact that the combination of $(p',z') =  (0,0)$  and \Cref{eq:pdefp} would imply $p = - (e_0 + e_1)$. Since $ p\ge 0$, this further  implies  $e_0=e_1 =0$  and thus  $m = m'$ and  $p=z=0$.
This would contradict   $(m';z';p') \not= (m;z;p).$
\medskip

For the other direction assume that Equations (\ref{eq:2Deqiva})-(\ref{eq:2Deqivd})  all hold. We
will show that % $e_0=m-m'-z$ and  $e_1=z$ satsify  $e_0,e_1 \ge 0$, $e_0 +e_1 \le p'$ and 
Equations 
(\ref{eq:pdefe})-(\ref{eq:pdefp}) with   $(m';z';p') \not= (m;z;p)$  also all  hold with $e_0=m-m'-z$ and  $e_1=z$.
Equations (\ref{eq:pdefm}) and (\ref{eq:pdefz}) are trivially satisfied.  (\ref{eq:pdefp}) follows from
\begin{eqnarray*}
p &=&  2m'+ 2p'+z' - 2m\\
&=& z'  - 2(m-m') + 2p'\\
&=& z' - 2(e_0 + e_1) + 2p'\\
&=& z' + 2(p'-e_0-e_1).
\end{eqnarray*}

Next note that $e_1 = z \ge 0$ and, from   (\ref{eq:2Deqiva}) and  (\ref{eq:2Deqivc}),
$e_0 = m-z-m' \ge 0$. Finally, from from (\ref{eq:2Deqivb}),  
$p' \ge m-m' = e_0+e_1$ so     
\Cref{eq:pdefe} holds. 

 It only remains to show that  $(m';z';p') \not= (m;z;p).$ Suppose, not and $(m';z';p') = (m;z;p).$
 Then from (\ref{eq:pdefp}),  $e_0=e_1=0$ so from (\ref{eq:pdefz}) $z'=z=0$ and thus from (\ref{eq:pdefp}), $p = 2 p'$ implying $p'=p=0.$  But this contradicts (\ref{eq:2Deqivd}).
%Equation (\ref{eq:2Deqiva})  
\end{proof}

\begin{definition}
\label{def:Idef}
For $d \ge 0$, define
\begin{eqnarray*}
\CALI(d) &\triangleq & \{(m;p;z) \in {\cal S}_n \,:\, 2m + p =d\},\\
\CALI'(d) &\triangleq & \{(m';p';z') \in {\cal S}_n \,:\, 2m' + 2p'  + z'=d \ \mbox{and} \ (p',z') \not=(0,0)\}.
\end{eqnarray*}
%For $\alpha \in \CALI(d)$ 
\end{definition}

Now note that Lemma \ref{lem:2Deqiv} can be rewritten as 
\begin{corollary}  
\label{cor:batch}
If $\alpha \in \CALI(d)$ then 
$$P(\alpha) =
\Bigl\{ (m';p';z') \in \CALI'(d)) \,:\, m'+p' \ge m \mbox{ and } m' \le m-z
\Bigr\} \subseteq \CALI'(d)).$$
\end{corollary}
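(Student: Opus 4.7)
The plan is to recognize that this corollary is essentially a bookkeeping reformulation of Lemma \ref{lem:2Deqiv}, where the four equivalent conditions characterizing the relation $\alpha' \rightarrow \alpha$ are repackaged using the stratification by the parameter $d$ introduced in Definition \ref{def:Idef}. There is no new combinatorial content; the work is purely to show that the four conditions split cleanly into ``$\alpha' \in \CALI'(d)$'' plus ``$m'+p' \ge m$ and $m' \le m-z$''.

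First I would fix $\alpha = (m;p;z) \in \CALI(d)$, which by Definition \ref{def:Idef} means $2m+p = d$. For an arbitrary candidate predecessor $\alpha' = (m';p';z') \in \CALS_n$ with $\alpha' \ne \alpha$, I would invoke Lemma \ref{lem:2Deqiv}: membership $\alpha' \in P(\alpha)$ is equivalent to the simultaneous validity of the four conditions (\ref{eq:2Deqiva})--(\ref{eq:2Deqivd}). Substituting $2m+p = d$ into (\ref{eq:2Deqiva}) turns it into $2m'+2p'+z' = d$, and this together with (\ref{eq:2Deqivd}), namely $(p',z') \ne (0,0)$, is precisely the defining requirement for $\alpha' \in \CALI'(d)$. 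The remaining two conditions (\ref{eq:2Deqivb}) and (\ref{eq:2Deqivc}) already read verbatim as $m' + p' \ge m$ and $m' \le m - z$, so the displayed set equality follows at once. The inclusion in $\CALI'(d)$ is then immediate from the set-builder notation.

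One small point I would take care to address is the side condition $\alpha' \ne \alpha$ that is implicit in ``$\alpha' \rightarrow \alpha$'': if $\alpha' = \alpha$ then $(m',p',z') = (m,p,z)$ would give $2m'+2p'+z' = 2m+2p+z$, which can equal $2m+p = d$ only when $p = z = 0$, and in that case $(p',z') = (0,0)$ is ruled out by the $\CALI'(d)$ membership condition. So restricting to $\alpha' \in \CALI'(d)$ automatically enforces $\alpha' \ne \alpha$, and the reformulation is clean.

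Since Lemma \ref{lem:2Deqiv} has already done all the algebraic work, there is no real obstacle here. The value of the corollary lies not in its proof but in its form: it exhibits $P(\alpha)$ as a two-sided geometric region inside the single level set $\CALI'(d)$ indexed by $d = 2m+p$, which is exactly the structure needed to group signatures by $d$ and to cast the minimization in Definition \ref{def:OPT Tables} as a 2D range-minimum query over an appropriately indexed matrix.
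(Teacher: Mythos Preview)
Your proposal is correct and matches the paper's approach: the paper presents Corollary~\ref{cor:batch} simply as a rewriting of Lemma~\ref{lem:2Deqiv}, without a separate proof, and your argument supplies exactly the routine verification that conditions (\ref{eq:2Deqiva}) and (\ref{eq:2Deqivd}) encode membership in $\CALI'(d)$ while (\ref{eq:2Deqivb}) and (\ref{eq:2Deqivc}) give the two remaining inequalities. Your handling of the $\alpha' \ne \alpha$ side condition is a nice extra detail that the paper leaves implicit.
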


Next note 
\begin{lemma}
\label{lem:Inc}
 Let $d >0.$  Then
\begin{equation}
\label{eq:inclusion}
 \CALI'(d) \subseteq  \bigcup_{d' < d}\CALI(d').
 \end{equation}
\end{lemma}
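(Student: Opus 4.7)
The proof is essentially immediate from unwinding the definitions, so my plan is more about making the observation explicit than executing a difficult argument.

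First I would fix an arbitrary element $\alpha = (m'; p'; z') \in \CALI'(d)$ and recall what membership in $\CALI'(d)$ gives me: by \Cref{def:Idef}, $\alpha \in \CALS_n$, it satisfies $2m' + 2p' + z' = d$, and crucially $(p', z') \neq (0,0)$. Since $\alpha \in \CALS_n$, the same triple $(m';p';z')$ is eligible to sit in some $\CALI(d'')$, provided I choose $d''$ to be the value of the linear form $2m+p$ evaluated at $\alpha$.

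So the natural candidate is $d'' \triangleq 2m' + p'$. With this choice, $\alpha \in \CALI(d'')$ holds by \Cref{def:Idef} with no further conditions to check. It remains to verify that $d'' < d$. A direct subtraction gives
\[
d - d'' = (2m' + 2p' + z') - (2m' + p') = p' + z'.
\]
Because $p', z' \ge 0$ and $(p', z') \neq (0,0)$, this difference is strictly positive, so $d'' < d$. Hence $\alpha \in \CALI(d'') \subseteq \bigcup_{d' < d} \CALI(d')$, which is what was required.

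There is no real obstacle here; the only point where one could slip up is forgetting that the constraint $(p', z') \neq (0,0)$ is precisely what rules out the edge case $d'' = d$ (which would occur exactly when $p' = z' = 0$). This is the reason that exclusion was built into the definition of $\CALI'(d)$ in the first place, and it is why \Cref{lem:Inc} restricts to $d > 0$ --- for $d = 0$ the set $\CALI'(0)$ is empty and the inclusion is vacuous.
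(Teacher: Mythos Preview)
Your proof is correct and follows essentially the same approach as the paper: both identify $d'' = 2m' + p'$ as the index with $\alpha \in \CALI(d'')$ and observe that $d - d'' = p' + z' > 0$ because $(p',z') \neq (0,0)$. The only cosmetic difference is that the paper phrases this as a contradiction (assume $d \le d''$ and derive $p'+z' \le 0$), whereas you compute the difference directly.
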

\begin{proof}
Let $\alpha =(m',p',z')  \in \CALI'(d).$ Since the $\CALI(d')$ partition  ${\cal S}_n$, there must exist some $d'$ such that $\alpha \in \CALI(d').$  Suppose that $d \le d'.$  Then
$$ 2m'+2p'  +z' = d \le d' = 2m' + p', $$
implying  $p' + z' \le 0$ so $(p,z) = (0,0)$, contradicting the definition of $\CALI'(d)).$ Thus $d' < d.$
Since this is true for all $\alpha  \in \CALI'(d),$
\Cref{eq:inclusion} follows.
\end{proof}

\Cref{cor:batch} and \Cref{lem:Inc}
together imply  that the $\OPT_s(\alpha)$ tables can be evaluated in the order  
   $\alpha \in \CALI(d)$  for  $d=1,2,3\ldots$. 
 This ordering guarantees that when  $\OPT_s(\alpha)$ is being  calculated,  all of the $\OPT_s(\alpha')$ entries for which 
 $\alpha' \in P(\alpha)$ have  been previously calculated.

For many $\alpha,$  $|P(\alpha)| = \Theta(n^2)$,  so calculating
 $\OPT_s(\alpha)$  would require $\Theta(n^2)$ time.  Since 
 $|\CALS_n| = \Theta(n^3)$,  this would imply an $O(n^5)$ time algorithm for filling in the entire table. This is similar to the $O(n^5)$ derivation in   \cite{dp}. We now show how to reduce this down to $O(n^3)$ using RMQs and  \Cref{lem:RRMQ}.

The sped up  $O(n^3)$ algorithm   works in batched stages.  In stage $d,$  the algorithm calculates
$\OPT_s(\alpha)$ for all $\alpha \in \CALI(d)$.  It  first spends $O(n^2)$ time building an associated matrix $M^{d}$ and then reduces the calculation of each $\OPT_s(\alpha)$ to a 2D-RMQ query (and possibly $O(1)$ extra work).

Before starting we quickly note  a small technical issue concerning the DP initial conditions.  Let 
$$\bar d_s = \max_{\alpha = (m;p;z) \in \CALI_s} 2m+p.$$
The starting stage of the algorithms  is just to calculate $\OPT_s(\alpha)$ for all $\alpha \in \CALI(d)$ with $d=1,\ldots,\bar d_s.$  Calculating all of these  requires only  $O(1)$ time.

 We now first describe the complete solution for $\OPT_0$, which will be easier, and then discuss the modifications needed for $\OPT_1.$

Assume then that, for some $d > \bar d_0,$    $\OPT_0(\alpha')$ is already known for  all $\alpha' \in \CALI(d')$, where  $d' < d.$
If  $\alpha=(m;p;z) \in \CALI(d)$ then, by definition, 
\begin{equation}
\label{eq:OPTMINDEF}
 \OPT_0(\alpha) =  \min_ { \alpha'= (m';p',z') \in P(\alpha)}
  \{\OPT_0(\alpha') + CW_{m'-z',m'}\}
\end{equation}
where all the $\OPT_0(\alpha') $ for $\alpha' \in P(\alpha)$ are already known.

Recall that there are $O(n^2)$ signatures  $\alpha' = (m';p';z') \in \CALI'(d)$.  The idea is to  arrange the  corresponding $O(n^2)$ values $\OPT(m';p';z')+ CW_{m'-z,z}$ in an array $M^{(d)}_{i,j}$ in such a way that, for each individual
$\alpha \in I(d)$, the minimization in \Cref{eq:OPTMINDEF} could be performed using just one   2D RMQ query in $M^{(d)}_{i,j}$.

The arrangement will use the invertible transformation (see \Cref{fig:transformation})
\begin{figure*}[t]
 \center
\includegraphics[width=5in]{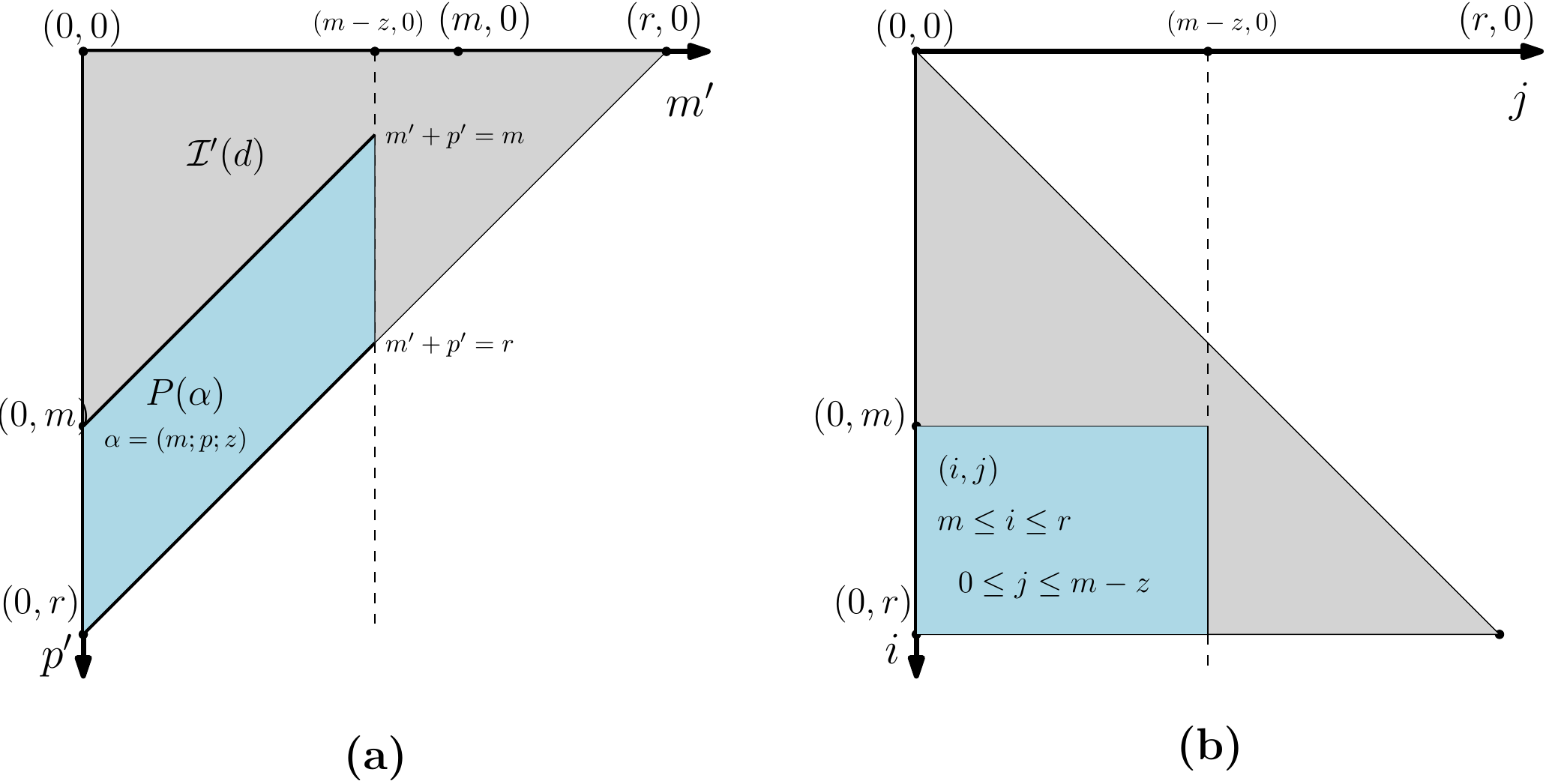}
\caption{The transformation from $(m',p')$ to $(j,i)$  described in the text. From \Cref{def:Idef}, if  $(m',p',z') \in \mathcal{I}'(d)$ then $z'= d - 2 m' - 2 p'$ is uniquely determined by ($m',p').$  In  (a), the right triangle bounded by vertices  $(0,0)$,  $(0,r)$ and $(r,0)$ with $r = \lfloor d/2 \rfloor$ is the  location of all $(p',m')$ pairs such that 
 $(m',p',z') \in \mathcal{I}'(d)$.  The blue shaded parallelogram  is the location of  all $(p',m')$ pairs such that 
 $(m',p',z') \in P(\alpha)$ for some $\alpha = (m,p,z) \in \mathcal{I}(d).$
 (b) illustrates the transformation  $(j,i) = (m',m'+p').$  Note how the blue parallelogram becomes  a rectangle, permitting the use of a 2D RRMQ query.}
\label{fig:transformation}
\end{figure*}
$$j =m'
\quad\mbox{and}
\quad
i= m'+p'.
$$  
Trivially  $j \le i.$  Furthermore,  %if $ (m';p';z') \in \CALI'(d)$ then  
$$(m';p';z') \in \CALI'(d)  \quad \Rightarrow \quad  d = 2 m' + 2p' + z' = 2i + z' $$
which in turn implies
$$2i \le d  \quad \mbox{and} \quad  z'= d - 2 i.
$$
Set $r = \lfloor d/2 \rfloor.$ Then
 %if   $ (m';p';z') \in \CALI'(d)$ then
\begin{equation}
\label{eq:trans1}
(m';p';z') \in \CALI'(d)  \quad \Rightarrow \quad 0 \le  j \le i \le r
%\quad 2i \le d,
\quad 
\mbox{and} \quad (m';p';z') = (j;i-j;d-2i).
\end{equation}
Furthermore,  working backwards,
\begin{equation}
\label{eq:trans12}
0 \le  j \le i \le r 
\quad  \mbox{and} \quad (i-j;d-2i) \not= (0,0)
\quad \Rightarrow \quad  (j;i-j;d-2i) \in \CALI'(d).
\end{equation}
where the second condition comes from the fact that
 $(m',p,z') \not\in \mathcal{I}'(d)$  if  $(p',z') = (0,0)$.

 The preceding discussion motivates defining the
$(r+1) \times (r+1)$
matrix (indices of $i$ and $j$ start at $0$) 
$$
\small % \hspace*{-.2in}
M^{(d)}_{i,j} \triangleq 
\begin{cases}
\infty & \mbox{if $i> j$}\\
\infty &  \mbox{if $i =j = \frac d 2$}\\
 \OPT_0(j;i-j; d-2i) + C W_{j-(d-2i),j}  & \mbox{Otherwise}.
\end{cases}
$$
Since all the values referenced are already known, this matrix can be built in $O(r^2) = O(n^2)$ time.

Then, if $\alpha \in \CALI(d),$ from \Cref{cor:batch},
\begin{eqnarray*}
 \OPT_0(\alpha)
 &=& 
% \min_ {\substack{ \alpha'= (m';p',q')\\ \alpha'  \in P(\alpha)}}
  \min_ { \alpha'= (m';p',z')  \in P(\alpha)}
  \{\OPT_0(\alpha') + CW_{m'-z',m'}\}\\
  &=&  \min\{ M^{(d)}_{i,j} \,:\,  i =m'+p' \ge  m  \mbox{ and } j  = m'\le  m-z\}\\
  &=&  RMQ\left(M^{(d)}_{i,j}: m,r,0,m-z\right)\\
  &=&   RRMQ\left(M^{(d)}_{i,j}: m,m-z\right) 
\end{eqnarray*}
Note that the RRMQ query result also provides the {\em indices} of the minimizing entry, which provides the corresponding $\Pred_0(\alpha)$ value as well.

\Cref{lem:RRMQ} permits calculating {\em all} the $O(r^2)$ $RRMQ\left(M^{(d)}_{i,j}: a,b\right)$ values in $O(r^2) = O(n^2)$ time.
Thus, all of the  $ \OPT_0(\alpha)$ for $\alpha \in \CALI(d)$ (and  their corresponding $\Pred_0(\alpha)$ values)  can be calculated in $O(n^2)$ total time.
 Doing this for all $O(n)$ values of $d> \bar d_0$ in increasing order, yields the required  $O(n^3)$ time algorithm for filling in the $\OPT_0$ matrix.

\medskip

We next describe the more complicated algorithm for the $\OPT_1$ case.

Assume that  $\OPT_1(\alpha)$ is already known for  all $\alpha \in \CALI'(d)$, $d' < d.$
If  $\alpha=(m;p;q) \in \CALI(d)$ then,  similar to the $\OPT_0$ case,
\begin{eqnarray*}
 \OPT_1(\alpha)
%&=&
% \min_ {\substack{ \alpha'= (m';p',q')\\ \alpha'  \in \CALI'(d)\\   \alpha' \rightarrow \alpha}}
%  \{\OPT^1(\alpha') - CW_{m',m-z}\}\\
   &=& 
% \min_ {\substack{ \alpha'= (m';p',q') \\ \alpha' \in P(\alpha)}}
   \min_ { \alpha'= (m';p',z')  \in P(\alpha)}
  \{\OPT_1(\alpha')  - CW_{m',m-z}\}
\end{eqnarray*}
where all the $\OPT_1(\alpha') $ for $\alpha' \in P(\alpha)$ are already known.

Following the approach in the $\OPT_0$ algorithm, for fixed $d,$ 
we would like  to arrange the  $O(n^2)$ values
$\left(\OPT_1(\alpha')  - CW_{m',m-z}\right)$ for
 $\alpha' = (m';p';z') \in \CALI'(d)$,
  appropriately in an array so that each $\OPT_1(\alpha)$ entry could be resolved using one 2D RMQ query.
The difficulty  is that the {\em values} of the array entries  depend upon {\em  both}  $\alpha$ and  $\alpha'$.  More specifically, the $CW_{m',m-z}$ term would have to be reprocessed  for each $(m,z)$ pair.  Thus, no fixed $M_{i,j}$ array, independent of $(m,z)$, could be defined.

Instead, we utilize a relationship between different queries.  More specifically, 
let  $\alpha=(m;p;z) \in \CALI(d)$.  
From Equation (\ref{eq:2Deqivc}), 
 $z\le m-m' \le m.$   If $z =m,$ then $m'=0$ so   $W_{m',m-z} = W_{0,0} =0$ and
\begin{eqnarray*}
 \OPT_1(\alpha)
   &=& 
    \min_ { \alpha'= (m';p',z')  \in P(\alpha)}
  \{\OPT_1(\alpha')  - CW_{m',m-z}\}\\
  &=&
      \min_ { \alpha'= (m';p',z')  \in P(\alpha)}
  \{\OPT_1(\alpha')\}.
\end{eqnarray*}
If $z < m$ then, splitting into the cases $m' = m-z$ and  $m' \le  m-z -1$ yields, 
$$\OPT_1(\alpha) = \min(A, B)$$
where 
\begin{eqnarray*}
A
    &\triangleq & 
 \min_ {\substack{ \alpha'= (m';p',z') \in P(\alpha)\\  m' =m-z}}
  \{\OPT_1(\alpha')  - CW_{m',m-z}\},\\
B
    &\triangleq & 
 \min_ {\substack{ \alpha'= (m';p',z') \in P(\alpha)\\  m' \le m-z-1}}
  \{\OPT_1(\alpha')  - CW_{m',m-z}\}.
\end{eqnarray*}
First note that if $m' = m-z$, then $W_{m',m-z} =0$ so
\begin{eqnarray*}
A
%    &=& 
% \min_ {\substack{ \alpha'= (m';p',z') \in \CALI'(d) \\  m'+p' \ge m\\  m' =m-z}}
%  \{\OPT_1(\alpha')  - CW_{m',m-z}\}\\
     &=& 
 \min_ {\substack{ \alpha'= (m';p',z') \in \CALI'(d) \\  m'+p' \ge m\\  m'  = m-z}}
  \{\OPT_1(\alpha')\}.
\end{eqnarray*}
Next note that, from \Cref{cor:batch},
\begin{equation*}
\begin{aligned}
P((m;p;z))  \cap &  \{(m';p';z') \,:\, m' \le m - z - 1\}\\
=&\Bigl\{ (m';p';z') \in \CALI'(2m+p)) \,:\, m'+p' \ge m \mbox{ and } m' \le m-z-1\Bigr\}\\
=& P((m; p; z+1))
\end{aligned}
\end{equation*}
and from \Cref{def:Wdef}
$$ W_{m',m-z} = p_{m-z} + W_{m',m-(z+1)}.$$
Thus
\begin{eqnarray*}
B
%    &=& 
% \min_ {\substack{ \alpha'= (m';p',z') \in \CALI'(d) \\  m'+p' \ge m\\  m' \le m-z-1}}
%  \{\OPT_1(\alpha')  - CW_{m',m-z}\}\\
   &=& 
 -C p_{m-z} +
% && \min_ {\substack{ \alpha'= (m';p',z') \in \CALI'(d) \\  m'+p' \ge m\\  m' \le m-(z+1)}}
%  \{\OPT_1(\alpha')  - CW_{m',m-(z+1)}\}\\
 %\min_ {\substack{ \alpha'= (m';p',z')\\ \alpha'  \in P(m;p;z+1)}}
 \min_ { \alpha'= (m';p',z') \in P((m;p;z+1))}
  \bigl\{\OPT_1(\alpha')  - CW_{m',m-(z+1)}\bigr\}\\
  &=&  \OPT_1(m;p;z+1) - C p_{m-z}
\end{eqnarray*}

Again use the same transformation  $j = m'$ and $i = m'+p'$  so that 
\Cref{eq:trans1,eq:trans12} apply.
%  if   $ (m';p';z') \in \CALI'(d)$ then
%$$
%j \le i,\quad 2i \le d,\quad 
%(m';p';z') = (j;i-j;d-2i).
%$$
Set $r = \lfloor d/2 \rfloor$,   define the $(r+1) \times (r+1)$  array
$$
M^{[d]}_{i,j} \triangleq 
\begin{cases}
\infty & \mbox{if $i < j$}\\
\infty &  \mbox{if $i = j = \frac d 2$}\\
 \OPT_1(j;i-j; d-2i)   & \mbox{Otherwise}.
\end{cases}
$$
Next,  use 
\Cref{lem:RRMQ} to  calculate {\em all} the $O(r^2)$ $RCQ\left(M^{(d)}_{i,j}: a,b\right)$ values in $O(r^2) = O(n^2)$ time.

Let  $\alpha=(m;p;z) \in \CALI(d)$. Then, from the discussion above,
\begin{itemize}
\item[If $m=z$,]  
\begin{eqnarray*}
 \OPT_1(\alpha)
     &=& 
 \min_ {\substack{ \alpha'= (m';p',z') \in \CALI'(d) \\  m'+p' \ge m\\  m'  = 0}}
  \{\OPT^1(\alpha')\}\\
    &=&  \min\left\{ M^{[d]}_{i,j} \,:\,  i \ge  m  \mbox{ and } j  = 0\right\}\\
  &=&  RMQ\left(M^{[d]}_{i,j}: m,r,0,0\right)\\
  &=& RCQ\left(M^{[d]}_{i,j}: m,0\right)
\end{eqnarray*}
which is already known.
\item[If $z < m$,] 
\begin{eqnarray*}
A
    &=& 
 \min_ {\substack{ \alpha'= (m';p',z') \in \CALI'(d) \\  m'+p' \ge m\\  m'  = m-z}}
  \{\OPT_1(\alpha')\}\\
    &=&  \min\left\{ M^{[d]}_{i,j} \,:\,  i \ge  m  \mbox{ and } j  = m-z\right\}\\
  &=&  RMQ\left(M^{[d]}_{i,j}:m,r,m-z,m-z\right)\\
  &=& RCQ\left(M^{[d]}_{i,j}:m,m-z\right).
\end{eqnarray*}
Thus, for $\alpha=(m;p;z)$ with $ z < m$,
\begin{eqnarray}
 \OPT_1(\alpha)
     &=& \min(A,B) \label{eq:OPT2rec}\\
     &=& \min\Bigl(RCQ\left(M^{[d]}_{i,j}:m,m-z\right),\, \OPT_1(m;p;z+1) - C p_{m-z}\Bigr) \nonumber
\end{eqnarray}
which, since $RCQ\left(M^{[d]}_{i,j}:m,m-z\right)$ is already known,
 can be calculated in $O(1)$ time if $\OPT_1(m;p;z+1)$ had already been calculated. The associated $\Pred_1(\alpha)$ can be found appropriately.
\end{itemize}

%ote that the RMQ query result also provides the {\em indices} of the minimizing entry, which provides the corresponding $\Pred_0(\alpha)$ value as well..

This permits calculating $\OPT_1(\alpha)$ and (and  their corresponding $\Pred_1(\alpha)$ values)   for all $\alpha = (m;p;z) \in \CALI(d)$ in a total of $O(n^2)$ time as follows: 
\begin{enumerate}
\item First spend $O(n^2)$ time
calculating  {\em all} the  $RCQ\left(M^{(d)}_{i,j}: a,b\right)$ values.
\item  For each of the $O(n)$ possible  fixed pairs $m,p$ satisfying $2m+p =d$
\begin{enumerate}
\item  Set  $ \OPT_1(m;p;m)=RRMQ\left(M^{[d]}_{i,j}: m,0\right)$.
\item Then, for $z=m-1,m-2,\ldots,$ calculate $\OPT_1(m;p;z)$ in $(1)$ time from  $\OPT_1(m;p;z+1)$ using Equation (\ref{eq:OPT2rec}). %  and a 2D RMQ query  in $O(1)$ time.
\end{enumerate}
\end{enumerate}

Since this is $O(n^2)$ time for fixed $d,$  
 doing this for all $O(n)$ values of $d> \bar d_1$ in increasing order yields the required  $O(n^3)$ time algorithm for filling in the $\OPT_1$ matrix.

\section{A Quick Introdution to AIFV-$2$ codes}
\label{sec:The Codes}
%\section{A Quick Introdution to AIFV-$2$ codes}
%\label{sec:The Codes}
%\input{AIFV_Intro.tex}

{\em Note:  This introduction is copied with some small modifications, from \cite{golin2020polynomial}.}

Let $X$ be a memoryless  source over a finite alphabet $\mathcal{X}$ of size $n$.
$ \forall a_i \in \mathcal{X}$,   let $p_i = P_{X}(a_i)$ denote the probability of $a_i$ ocurring.
Without loss of generality  we assume that %$p_i \ge p_j,$ % $ P_X(a_i)\geq P_X(a_j)$ 
%for $1\leq i \leq j \leq n$. 
%(Other wise just sort the alphabets according to the probabilities and rename them). We denote $p_i=P_X(a_i), \forall i$. 
%Then 
$$p_1 \geq p_2 \geq \cdots \geq p_n >0  \text{ and } \sum_{i=1}^{n}p_i = 1.$$
A  {\em codeword} $c$ of a binary AIFV code is  a string in $\{0,1\}^*.$
%$ \mathcal{Y}^*$ where   $\mathcal{Y}=\{0,1\}$. 
$|c|$ will denote   the length of codeword $c$.

\begin{figure*}[t]
 \center
\includegraphics[width=4.5in]{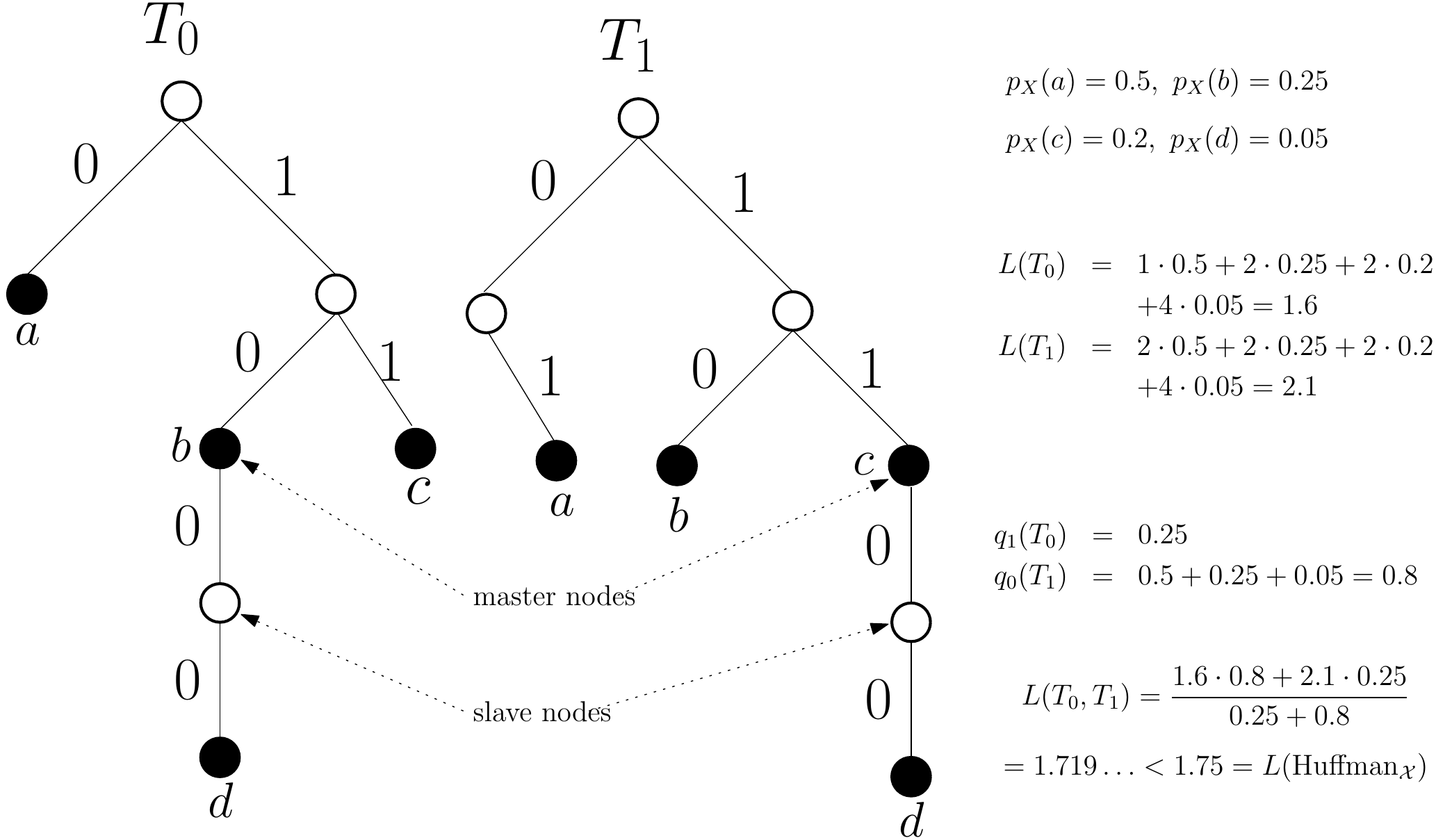}
  \caption{A binary AIFV-$2$ code for  $\mathcal{X}=\{a,b,c,d\}$ with associated probabilities. The encoding of  ${\bf b}d {\bf b}c {\bf a}a$ is
  ${\bf Y} = {\bf 10}1100 {\bf 10} 11 {\bf 01}0 .$ Note that $d,c$ and the first $a$  were encoded using $T_1$ while  the other letters were encoded using $T_0.$ This code has cost $\approx 1.72$ which is better than the optimal Huffman code for the same source which has 
 $L(\rm{Huffman_{\mathcal{X}}})=1.75$ 
  }
  \label{fig:BBB}\end{figure*}

%We assume that all probabilities are represented  using binary. Let $b_i$ be the number of bits used to represent $p_i$.
%For later use we set $b= \max_i b_i.$ 
%Since $\min_i p_i =  p_n \le \frac 1 n$, this implies $b = \Omega(\log n)$.  Note that that the total size of the problem's input is
%$L = \sum_i b_i = \Omega(n \log n).$
We now briefly describe the structure of  Binary AIFV-$2$ codes using the terminology of \cite{introduceMAIFV}.
See  \cite{introduceMAIFV} for more details and \Cref{fig:BBB} for an example.

Codes are  represented via binary trees with left edges  labelled by ``$0$'' and right edges by ``$1$''.
A Binary AIFV-$2$ code  is {\em a pair}  of binary code trees,  $T_0, T_1$ satisfying: 
 \begin{itemize}
 \item Complete internal nodes in $T_0$ and $T_1$ have both left and right children.
\item Incomplete internal nodes (with the unique exception of the left child of the root of $T_1$)  have only a ``$0$'' (left) child.\\
Incomplete internal nodes are labelled as  either  {\em master} nodes or  {\em slave} nodes. 
\item  A master node  must be an incomplete node with an incomplete child\\
The child of a master node is a slave node.\\
{\em This implies that a master node is connected to its unique grandchild via  ``$00$'' with the intermediate node being  a slave node.}
%\vspace*{-.05in}
\item Each source symbol is  assigned to one node in $T_0$ and one node in $T_1$. \\The nodes to which they are assigned are either leaves  or master nodes.\\
{\em Symbols are not assigned to complete internal nodes or slave nodes.}
\item The root of $T_1$ is complete and its ``$0$'' child is a slave node.\\
%must have two children connected with code symbols '0', and '1'. 
The  root of $T_1$ has no  ``$00$'' grandchild.

\end{itemize}

Let $c_s(a), s\in \{0,1\}$ denote the codeword of $a\in \mathcal{X}$ encoded by $T_s$. 
The encoding procedure for a sequence  $x_1, x_2\ldots$ of source symbols works as follows.\\
0.  Set $s_1 = 0$ and $j=1.$\\
1. Encode $x_j$ as $c_{s_j}(x_j).$\\
2. If $c_{s_j}(x_j)$ is a leaf  in $T_{s_j}$, then set $s_{j+1} =0$ \\ \hspace*{.4in} else   set 
$s_{j+1} =1$   \quad \%\ this occurs when $c_{s_j}(x_j)$ is a master node  in $T_{s_j}$ \\
3. Set $j=j+1$ and Goto 1.

\medskip

Note that a symbol is encoded using $T_0$ if and only if its predecessor was encoded using  a leaf node and it is encoded 
using $T_1$ if and only if its predecessor was encoded using  a master node. %sometimes flips  the encoding procedure between the two trees.  
The decoding procedure is  a straightforward reversal of the encoding procedure. Details are provided in 
\cite{original} and \cite{dp}.  The important observation is that identifying the end of  a codeword might first require reading an extra two bits past its ending, resulting in a two bit delay,  so decoding  is not instantaneous.

Following \cite{original}, we can 
now  derive the average codeword length of a binary AIFV-$2$ code  defined by trees $T_0, T_1$.  The average codeword length $L(T_s)$ of $T_s$, $s \in \{0,1\},$  is 
$$L(T_s)=\sum_{i=1}^{n}|c_s(a_i)|p_i.$$

If the current symbol $x_j$ is encoded by a leaf (resp. a master node) of $T_{s_j}$, then the next symbol $x_{j+1}$ is encoded by $T_0$ (resp. $T_1$). This process can be modelled as  a two-state Markov chain with the state being the current encoding tree. 
Denote the transition probabilities for switching from code tree $T_s$ to $T_{s'}$ by $q_{s'}(T_s)$. Then, from the definition of the code trees and the encoding/decoding protocols:
$$q_0(T_s)=\sum_{a\in \mathcal{L}_{T_s}}P_X(a)
\quad \mbox{and} \quad 
q_1(T_s)=\sum_{a\in \mathcal{M}_{T_s}}P_X(a)$$
where $\mathcal{L}_{T_s}$ (resp. $\mathcal{M}_{T_s})$ denotes  the set of source symbols $a\in \mathcal{X}$ that are  assigned to a leaf node (resp. a master node) in $T_s$.

Given binary AIFV-$2$ code $T_0,T_1,$ 
as the number of symbols being encoded approaches infinity, the stationary probability of using  code tree $T_s$ can then be calculated to be 
\begin{equation}
\label{eq:2Dnondegen}
P(s|T_0, T_1)=\frac{ q_s(T_{\hat{s}})}{q_1(T_0)+q_0(T_1)}
\end{equation}
where $\hat{s}\in \{0,1\}, s\neq \hat{s}$.

%Given a binary AIFV code defined by trees $T_0, T_1$,  
The average (asymptotically) codeword length (as the number of characters encoded goes to infinity)  
of a binary AIFV-$2$ code
is then
\begin{equation}
\label{eq:LAIFV def}
L_{AIFV}(T_0, T_1)=P(0|T_0, T_1)L(T_0)+P(1|T_0, T_1)L(T_1)
\end{equation}

\begin{algorithm*}[t]
  \caption{Iterative algorithm to construct an optimal binary AIFV-$2$ code \cite{yamamoto2015almost,dp}}\label{DIV}
  \begin{algorithmic}[1]
  \State $m \leftarrow 0;$\  $C^{(0)}=2-\log_2(3)$
  \Repeat
    \State $m\leftarrow m+1$
    \State $T_0^{(m)}=\text{argmin}_{T_0 \in \mathcal{T}_0(n)}\{L(T_0)+C^{(m-1)}q_1(T_0)\}$
    \State $T_1^{(m)}=\text{argmin}_{T_1 \in \mathcal{T}_1(n)}\{L(T_1)-C^{(m-1)}q_0(T_1)\}$
    \State Update  $$C^{(m)}=\frac{L\left(T_1^{(m)}\right)-L\left(T_0^{(m)}\right)}{  q_1\left(T_0^{(m)}\right) + q_0\left(T_1^{(m)}\right)  }$$
  %   \State Update cost as $C^{(m)}=\frac{L(T_1^{(m)})-L(T_0^{(m)})}{  q_1(T_0^{(m)}) + q_0(T_1^{(m)})  }$
  \Until $C^{(m)}=C^{(m-1)}$
  \State //  Set $C^* = C^{(m)}. $ Optimal binary AIFV-$2$ code is $T_0^{(m)}$, $T_1^{(m)}$
  \end{algorithmic}
  \label{alg: alg1}
\end{algorithm*}

\cite{original, yamamoto2015almost} showed that the  binary AIFV-$2$ code $T_0,T_1$ minimizing \Cref{eq:LAIFV def} 
 can be obtained by \Cref{alg: alg1}, in which $\mathcal{T}_0(n)$ (resp. $\mathcal{T}_1(n)$) is the set of all possible $T_0$ (resp. $T_1$) coding trees.
 %in which the minimizations on lines 4,5 are over all binary coding trees. %for $n$ words.
 It implemented the minimization (over all coding trees)  in lines 4 and 5 as an  ILP.
 %integer linear  programming problem.  
 In a  later paper \cite{dp}, the authors replaced this ILP with a  $O(n^5)$ time and $O(n^3)$ space
 DP % Dynamic Program 
 that modified a top-down tree-building DP   from 
\cite{golin,chan2000dynamic}.

\cite{dp,yamamoto2015almost} proved algebraically that \Cref{alg: alg1}  would terminate after a finite number of steps and that the  resulting tree pair $T_0^{(m)},$  $T_1^{(m)}$ is an optimal Binary AIFV-$2$ code. They were unable, though, to provide any bounds on the number of steps needed for termination. \cite {golin2019polynomial} then gave two new iterative algorithms that provably terminated in $O(b)$ iterations, where
 $b$ is the maximum number of bits required to store any of the probabilities $p_i$ (so these were weakly polynomial algorithms). More formally, let $o_i,b_i$ be such that $p_i = o_i 2^{- b_i}$ where $o_i < 2^{b_i}$ is an odd positive integer.  Then $b = \max_i b_i.$
 
 Each iteration step of \cite {golin2019polynomial}'s algorithm ran $O(1)$ of the DPs from  \cite{dp} so its  full algorithm for constructing  optimal AIFV-$2$ codes  ran in $O(n^5 b)$ time.  The results of this paper replace the $O(n^5)$-time DPs with $O(n^3)$-time DPs, leading to $O(n^3 b)$-time algorithms  for constructing  optimal AIFV-$2$ codes.
 
 We conclude this section by noting that the correctness of the DPs defined in both \cite{dp} and the next section assume  that $0 \le C^{(i)} \le 1$.  The need for this assumption  was implicit in \cite{dp} and is  made explicit in \Cref {lem:DPStructure} in the next section. The validity of this assumption was proven in \cite{golin2020polynomial}.

\section{ Deriving the DP}
\label{sec:Derivation}
Each iteration step in both  \cite{dp} and \cite{golin2019polynomial} requires finding trees that satisfy 
\begin{equation}
T_0(C) \triangleq  \text{argmin}_{T_0 \in \mathcal{T}_0(n)} \left\{\Cost_0(T_0:C) \right\},  \label{eq:argminT0}
\end{equation}
\begin{equation}
T_1(C) \triangleq   \text{argmin}_{T_1 \in \mathcal{T}_1(n)}  \left\{\Cost_1(T_1:C) \right\} \label{eq:argminT1},
\end{equation}
where
\begin{equation}
\label{eq:Cost0def}
\Cost_0(T:C)  \triangleq  L(T) + C q_1(T),
\end{equation}
\begin{equation}
\label{eq:Cost1def}
\Cost_1(T:C)  \triangleq  L(T) - C q_0(T).
\end{equation}
Since $C$ will be fixed at any iteration stage, we simplify our notation by  assuming $C$ fixed   and writing   $\Cost_0(T)$ and $\Cost_1(T)$ to denote
\Cref{eq:Cost0def,eq:Cost1def}.

\begin{definition}
Let $T$ be  a binary AIFV coding tree.  Define 
$$\forall a_i \in \mathcal{X},\quad
\begin{array}{ccl}
c_T(a_i)  &\triangleq &\mbox{codeword in $T$ associated with $a_i$},\\
d_T(i)    &\triangleq &  |c_T(a_i)|.
\end{array}
 $$
By the natural correspondence,  $d_T(i)$ is the depth of the node in $T$ associated with  $a_i$ so $L(T) = \sum_{i=1}^n d_T(i) p_i.$
% Note that this node might be either a leaf of a master node.  \
Further define
$$\forall a_i \in \mathcal{X}, \quad
m_T(i) \triangleq 
\begin{cases}
1 & \mbox{if $c_T(a_i)$ is a master node in $T$},\\
0 & \mbox{if $c_T(a_i)$ is a leaf in $T$,}
\end{cases}, \quad
\ell_T(i) 
\triangleq 
\begin{cases}
0 & \mbox{if $m_T(i) =1$}.\\
1 & \mbox{if $m_T(i) =0$}.
\end{cases}
$$
$m_T(i)$ and $\ell_T(i)$ are indicator functions as to whether $a_i$ is encoded by a master node or a leaf in $T,$ 
so, $\forall i,$  $m_T(i)+\ell_T(i)=1.$
%
%$$\forall a_i \in \mathcal{X}, \quad
%\ell_T(i) =
%\begin{cases}
%1 & \mbox{if $c_T(a_i)$ is a leaf node in $T$},\\
%0 & \mbox{if $c_T(a_i)$ is a master node  in $T$.}
%\end{cases}
%$$
\end{definition}

Note that using this new notation
\begin{eqnarray*}
\Cost_0(T)  &=& \sum_{i=1}^n d_T(i) p_i + C \sum_{i=1}^n m_T(i) p_i,\\
\Cost_1(T)  &=& \sum_{i=1}^n d_T(i) p_i - C \sum_{i=1}^n \ell_T(i) p_i.
\end{eqnarray*}

We now show that $0 \le C \le 1$ implies that $T_0(C)$ and $T_1(C)$  can be assumed to possess a nice ordered structure.
\begin{lemma}
\label{lem:DPStructure} 
Let $0 \le C \le 1$.
Then,  if $s=0$   (resp. $s=1$) there exists a tree $T_0(C)\in \mathcal{T}_0(n)$  (resp. $T_1(C)\in \mathcal{T}_1(n)$) satisfying \Cref{eq:argminT0} (resp. \Cref{eq:argminT1}) that,  for all $i <j$, satisfies the following two properties:
\begin{enumerate}
%\setcounter{enumi}{1}
%\item   $|c_s(a_i)| \le |c_s(a_j)|.$
\item[(P1)]  $d_{T_s}(i) \le  d_{T_s}(j).$
 \item[(P2)] If $d_{T_s}(i)  =  d_{T_s}(j)$  and   
 $m_{T_s}(i)  = 1$   then      $m_{T_s}(j)=1$. % as well.
\end{enumerate}
\end{lemma}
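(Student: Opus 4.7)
The plan is a standard exchange argument that only permutes the assignment of source symbols to the designated (leaf or master) positions of an optimal tree, without altering the tree topology. The crucial observation is that the structural constraints defining $\mathcal{T}_s(n)$ (master/slave pairing, root conditions on $T_1$, incomplete-node rules, etc.) depend only on the topology of the tree and on the node labels ``leaf'', ``master'', ``slave'', ``complete internal''; they do not depend on which source symbol occupies which position. Hence any permutation of the assignment of $a_1,\ldots,a_n$ to the $n$ designated positions yields another tree in $\mathcal{T}_s(n)$, and the only change in $\Cost_s$ is through the relocation of the probability weights.

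For (P1), fix a minimizer $T$ of $\Cost_s$, suppose $i<j$ and $d_T(i)>d_T(j)$, and swap the assignments of $a_i$ and $a_j$ to obtain $T'\in\mathcal{T}_s(n)$. A direct computation (using $\ell_T=1-m_T$ in the $s=1$ case, which contributes only a constant offset between $\Cost_0$ and $\Cost_1$) gives, uniformly for both $s=0$ and $s=1$,
\[
\Cost_s(T')-\Cost_s(T) \;=\; (p_i-p_j)\Bigl[\bigl(d_T(j)-d_T(i)\bigr) \,+\, C\bigl(m_T(j)-m_T(i)\bigr)\Bigr].
\]
Since $p_i\ge p_j$, $d_T(j)-d_T(i)\le -1$ (integer depths), $|m_T(j)-m_T(i)|\le 1$, and $0\le C\le 1$, the bracket is at most $-1+C\le 0$. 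So the swap does not increase $\Cost_s$, and $T'$ is still a minimizer; moreover it strictly decreases the inversion count $|\{(i,j):i<j,\ d_T(i)>d_T(j)\}|$, so iterating the swap terminates at a minimizer satisfying (P1).

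For (P2), start with a minimizer $T$ that already satisfies (P1) and suppose $i<j$, $d_T(i)=d_T(j)$, $m_T(i)=1$, $m_T(j)=0$. Again swap $a_i$ and $a_j$. Depths are unchanged, so (P1) is preserved, and the formula above specializes (with $d_T(j)-d_T(i)=0$ and $m_T(j)-m_T(i)=-1$) to a cost change of $-C(p_i-p_j)\le 0$. The resulting tree is therefore still a minimizer, and the swap strictly decreases, within each depth layer, the potential $\sum_{k:\, d_T(k)=d} k\cdot m_T(k)$, so iterating terminates at a minimizer satisfying both (P1) and (P2).

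The main substantive points, in my view, are two. First, one must verify that permuting the symbol assignments really does preserve membership in $\mathcal{T}_s(n)$; inspection of the bullet list defining AIFV-$2$ trees shows every constraint is about the shape of the tree and the leaf/master/slave/complete-internal labels, both of which the swap leaves untouched, so this is essentially free. Second, the key analytic step is the bound $-1+C\le 0$, which is exactly where the assumption $C\le 1$ enters; without it, a swap that moves a high-probability symbol to a shallower master position could increase the $Cm_T$-term by more than the $d_T$-term decreases, so (P1) could fail at the optimum.
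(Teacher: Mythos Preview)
Your proposal is correct and follows essentially the same swapping argument as the paper. The paper frames it as ``take a minimizer with the fewest (depth-)inversions and derive a contradiction,'' whereas you iterate swaps and appeal to a strictly monotone potential; these are equivalent packagings of the same idea, and your unified cost-difference formula (via $\ell_T=1-m_T$, so $\Cost_1=\Cost_0-C$) is a clean way to handle $s=0,1$ simultaneously and to isolate exactly where $C\le 1$ is used.

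One small slip: in your termination argument for (P2), the potential $\sum_{k:\,d_T(k)=d} k\,m_T(k)$ strictly \emph{increases} under the swap (the master label moves from index $i$ to the larger index $j$), not decreases. This is harmless---the potential is bounded above, so the iteration still terminates---but you should flip the direction (or, equivalently, use the count of $m\ell$-inversions as the potential, which genuinely decreases).
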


\begin{proof}
We say that $T_1=T_1(C)$ (resp $T_2 = T_2(C)$)  is  a minimum cost tree (for $s$) if it  satisfies
Equation (\ref{eq:argminT0}) (resp. (\ref{eq:argminT1})).

The proof  follows from  swapping arguments.   ``Swapping''  $i$ and $j$ means assigning the old codeword $c_{T_s}(a_i)$ to $a_j$ and vice-versa. Let $T'_s$ be the tree resulting from swapping $i$ and $j$.  %Note that $T'_s$ 

 The following observation is  a straightforward calculation:
$$\Cost_s(T'_s)  =  \Cost_s(T_s) -\left(d_{T_s}(i) - d_{T_s}(j)\right) (p_i - p_j) + \delta(i,j)$$
where
$$
\delta(i,j) \triangleq 
\begin{cases}
0 & % \mbox{ if $c_s(a_i) \in  L$,  and  $c_s(a_j)\in L$}\\
 \mbox{ if $m_{T_s}(i) = m_{T_s}(j)$},\\
%0 & % \mbox{ if $c_s(a_i) \in  L$,  and  $c_s(a_j)\in L$}\\
% \mbox{ if $\ell_{T_s}(i) = \ell_{T_s}(j)=1$},\\
%0 & % \mbox{ if $c_s(a_i) \in  M$,  and  $c_s(a_j)\in M$}\\
% \mbox{ if $m_{T_s}(i) = m_{T_s}(j)=1$},\\
- C(p_i- p_j) &%  \mbox{ if $c_s(a_i) \in  M$,  and  $c_s(a_j)\in L$}\\
 \mbox{ if $m_{T_s}(i) =1$,  and  $\ell_{T_s}(j) =1$},\\
C(p_i- p_j) &  %\mbox{ if $c_s(a_i) \in  L$,  and  $c_s(a_j)\in M$}\\
\mbox{ if $\ell_{T_s}(i) =1$,  and  $m_{T_s}(j)=1$}.
\end{cases}
$$
We say that $(i,j)$ is an {\em inversion} for $T_s$ if $i < j$ and  $d_{T_s}(i) > d_{T_s}(j)$.

The calculations above and the fact that $0 \le C \le 1$,  immediately imply that if $(i,j)$ is an inversion for $T_s$ then 
$$\Cost_s(T'_s)  \le  \Cost_s(T_s).$$

Now let $T_s$ be a minimum cost tree for $s$ that  has the minimum number of inversions among all such trees.
If no inversion exists, then $T_s$ satisfies (P1). Otherwise,  let 
$(i,j)$ be the inversion that minimizes  $j-i$. Swapping $i$ and $j$ decreases the number of inversions by $1$ while not increasing the cost of the tree, contradicting the definition of $T_s$.  
We may  therefore assume that $T_s$  contains no inversion and satisfies (P1).

%Now let $T_s$ be  atre
%To prove (P2) suppose that $d_{T_s}(i) = d_{T_s}(j)$.     
Now say that $(i,j)$ is an {\em $m\ell$-inversion}  in $T_s$ if   $i < j$, $d_{T_s}(i) = d_{T_s}(j)$,
 $m_{T_s}(i)=1$   and  $\ell_{T_s}(j)=1$.
  Let $T_s$ be a minimum cost  tree for $s$ that satisfies (P1) and has
the fewest number of $m \ell$-inversions. If no $m \ell$-inversion exists, then $T_s$ also satisfies (P2) so the  lemma is correct.  Otherwise
let $(i,j)$ be an $m \ell$-inversion that minimizes $j-i.$
Let $T'_s$ be the tree that results by swapping $i$ and $j.$  Then $T'_s$ will still satisfy (P1) but the numbers of inversions will decrease by $1$ while 
$$ \Cost_s(T'_s)  =   \Cost_s(T_s) - C(p_i-p_j) \le \Cost_s(T_s). 
$$
This contradicts the definition of $T_s.$  We may therefore assume $T_s$  contains no inversions and satisfies both  (P1) and (P2).
\end{proof}

The consequences of  
Lemma \ref{lem:DPStructure} can be seen in \Cref{fig:Sigs}.  The Lemma   implies that the optimization in \Cref{eq:argminT0}  (resp.  \Cref{eq:argminT1}) can be restricted to trees that satisfy Properties  (P1) and (P2).  In particular, the indices of codewords on a level are smaller than the indices of codewords on deeper levels.  Also,  on  any  given level, the indices of the leaves are smaller than the indices of the master nodes. We therefore henceforth assume that all trees in $\mathcal{T}_0(n),$  and $\mathcal{T}_1(n)$ satisfy these properties.

\begin{definition}[Partial Trees and Truncation]  See \Cref {fig:Expand}.
\begin{itemize}
\item 
A {\em partial binary AIFV code tree} ({\em  partial tree} for short) $T$ is one that satisfies all of the  conditions of a binary AIFV code tree and  properties  (P1), (P2) except that it contains $ m \le  n$ codewords.  By (P1), the $m \le n$ codewords it contains  are  $c_T(a_1),\ldots,c_T(a_m).$  

\item For $s \in \{0,1\}$,  let  $\bar{\mathcal{T}}_s(n)$ denote the set of { partial   trees} that satisfy the conditions of $T_s$ trees.

For notational convenience, also set
$$\mathcal{T}(n) \triangleq  \mathcal{T}_0(n) \cup  \mathcal{T}_1(n)
\quad\mbox{and}\quad
\bar{\mathcal{T}}(n) \triangleq  \bar{\mathcal{T}}_0(n) \cup  \bar{\mathcal{T}}_1(n).$$

\item $T \in  \bar{\mathcal{T}}(n) $ is $i$-level if  $\depth(T) \le i+1.$ Set
$$\bar{\mathcal{T}}_s(i:n) \triangleq 
\left\{T_s \in \bar{\mathcal{T}}_s(n)\,:\, T_s \mbox{ is $i$-level}\right\}
\quad\mbox{and}\quad
\bar{\mathcal{T}}(i:n)  \triangleq  \bar{\mathcal{T}}_0(i:n) \cup\bar{\mathcal{T}}_1(i:n).
$$

\item 
Let $T \in \mathcal{T}(n)$. The  {\em $i$-level truncation} of $T,$ denoted by  
$\Trunc^{(i)}(T)$,
%is  the partial binary AIFV  code tree containing all nodes and leaves (i.e. slaves, masters, leaves, and complete internal nodes) at  depth $\leq i+1$ in $T_s$. 
is  the partial   tree that remains after removing all nodes at depth $j > i+1$ from $T.$
\end{itemize}
\end{definition}

\begin{Note}
$\forall T \in {\mathcal{T}}(n),$  $\Trunc^{(i)}(T)\in \bar{\mathcal{T}}(i:n).$ %is an $i$-level  tree.
\end{Note}

\begin{figure*}[t!]
 \center
\includegraphics[width=5.5in]{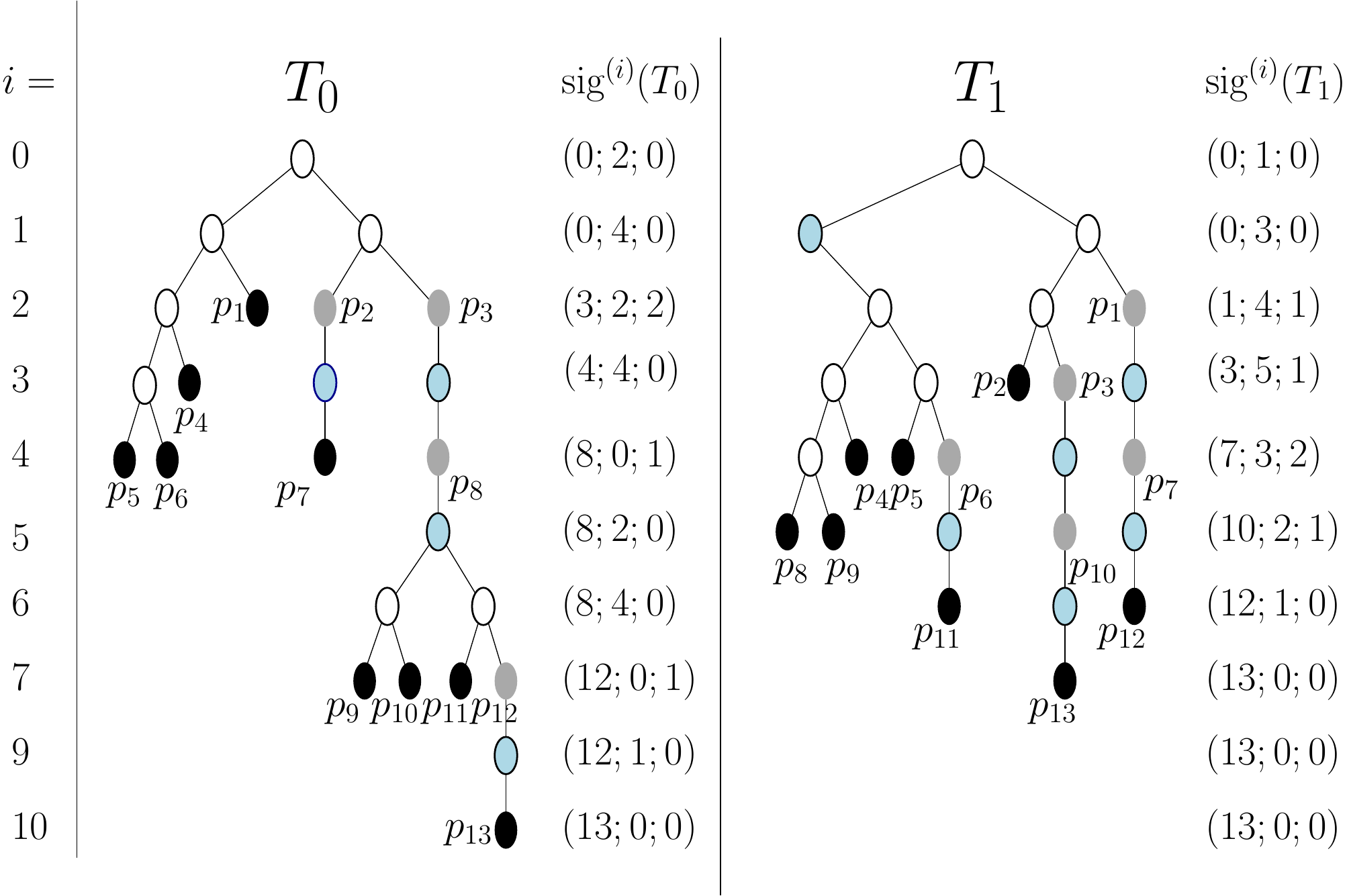}
  \caption{Black nodes are leaves, gray nodes master nodes  and blue ones slave nodes.  Note that on every level, the indices of the leaves are smaller than the indices of the master nodes. Also note that in all cases,
   if $\sig^{(i)}(T_s) = \left(m';p',z'\right)$ and $\sig^{(i+1)}(T_s) = \left(m;p,z\right)$  then
$2m' + 2p' + z'  = 2 m +p,$
$m'+p' \ge m$ and 
$m' \le m-z,$  as required by Lemma \ref{lem:2Deqiv}.
  }
  \label{fig:Sigs}
  \end{figure*}

%$$\begin{array}{|c||c|c|c|c|c|c|c|c|c|c|c|c|c|}
%\hline
%i     & 1  & 2  & 3   & 4   & 5   & 6   & 7   & 8   & 9   & 10  & 11  & 12  & 13  \\ \hline
%p_i  & .2 & .2 & .15 & .1  & .1  & .05 & .05 & .05 & .02 & .02 & .02 & .02 & .02 \\ \hline
%W'_i & .8 & .6 & .45 & .35 & .25 & .2  & .15 & .1  & .08 & .06 & .04 & .02 & 0   \\ \hline
%\end{array}
%$$
\begin{definition}[Signatures and Costs]
See \Cref{fig:Sigs,fig:Expand}.
%Let   $T \in \mathcal{T}(i:n)$.  % of depth $\le i+1$ 

\par\noindent{\bf (a)  $i-$level  Signatures:}
The $i-$level signature of $T$ is the ordered triple
$$\sig^{(i)}(T)\triangleq (m;p;z)$$
\noindent where
\begin{eqnarray*}
m &\triangleq & |\{j\,:\,  d_T(j) \le i\}| = \mbox{\# of codewords on or above level $i$ of $T$},\\
p &\triangleq & \mbox{\# of non-slave nodes on level $i+1$ of $T$},\\
z &\triangleq & |\{j\,:\,  d_T(j) =i \mbox{ and } m_T(j) =1\}| = \mbox{\# of master  nodes on level $i$ of $T$}.
\end{eqnarray*}
%
% $m$ is the number of masters and leaves with depth $j\leq i$ in $T,$ 
%$p$ is the total number of
%non-slave nodes 
%%combined number of master nodes, leaves, and complete internal nodes 
%on level $i+1$ of $T$
%and  $z$ is the number of master nodes   on  level $i$ of $T.$

 Note that
$$ \sig^{(i)}(T)  = \sig^{(i)}\left(\Trunc^{(i)}(T)\right).
$$

\par\noindent{\bf (b)  $i$-level Costs:}

 Let $\sig^{(i)}(T)=(m;p;z).$
The $i$-level costs  of $T$ are
$$\Cost_0^{(i)}(T) \triangleq  i W'_m + \sum_{i=1}^m   d_T(i) p_i+ C \sum_{i=1}^{m-z}  m_T(i) p_i.
$$
and
$$\Cost_1^{(i)}(T) \triangleq  i W'_m + \sum_{i=1}^m  d_T(i)p_i -  C \sum_{i=1}^m  \ell_T(i) p_i.
$$

%Note that
%$$ \Cost^s_i(T_s)  = \Cost^s_i(\Trunc_i(T_s)).
%$$

%\par\noindent{\bf (c) Signatures and Costs:} Let $d = \depth(T).$  Define
%$$\sig(T) = \sig^{(d)}(T)
%$$
%and
%\begin{equation}
%\label{eq:CoC1def}
%\Cost_0(T)  \triangleq  \Cost_0^{(d)}(T) 
%\quad \mbox{and}\quad
%\Cost_1(T)  \triangleq  \Cost_1^{(d)}(T). 
%\end{equation}
\end{definition}
%
%\begin{Note} 
% \Cref{eq:CoC1def}  is a generalization that is consistent with  \Cref{eq:Cost0def,eq:Cost1def}.
%Suppose
%$T_s \in {\mathcal{T}}_s(n),$  
%with $\depth(T_s) = d.$  Then $\sig^{(d)}(T_s)=(n;p; z)$, for some $p,z$ so $W'_n  =0$ and 
%$$
%\Cost_0^{(d)}(T_0) =  d W'_n + \sum_{i=1}^n  d_T(i) p_i+ C \sum_{i=1}^n  m_T(i) = \Cost_0(T_0).
%$$
%Similarly
%$$
%\Cost_1^{(d)}(T_1) =d W'_n +  \sum_{i=1}^n  d_T(i) p_i -  C \sum_{i=1}^n  \ell_T(i) 
%= \Cost_1(T_1).
%$$
%\end{Note}

\begin{figure*}[t]
\vspace*{.1in}

 \center
\includegraphics[width=5.5in]{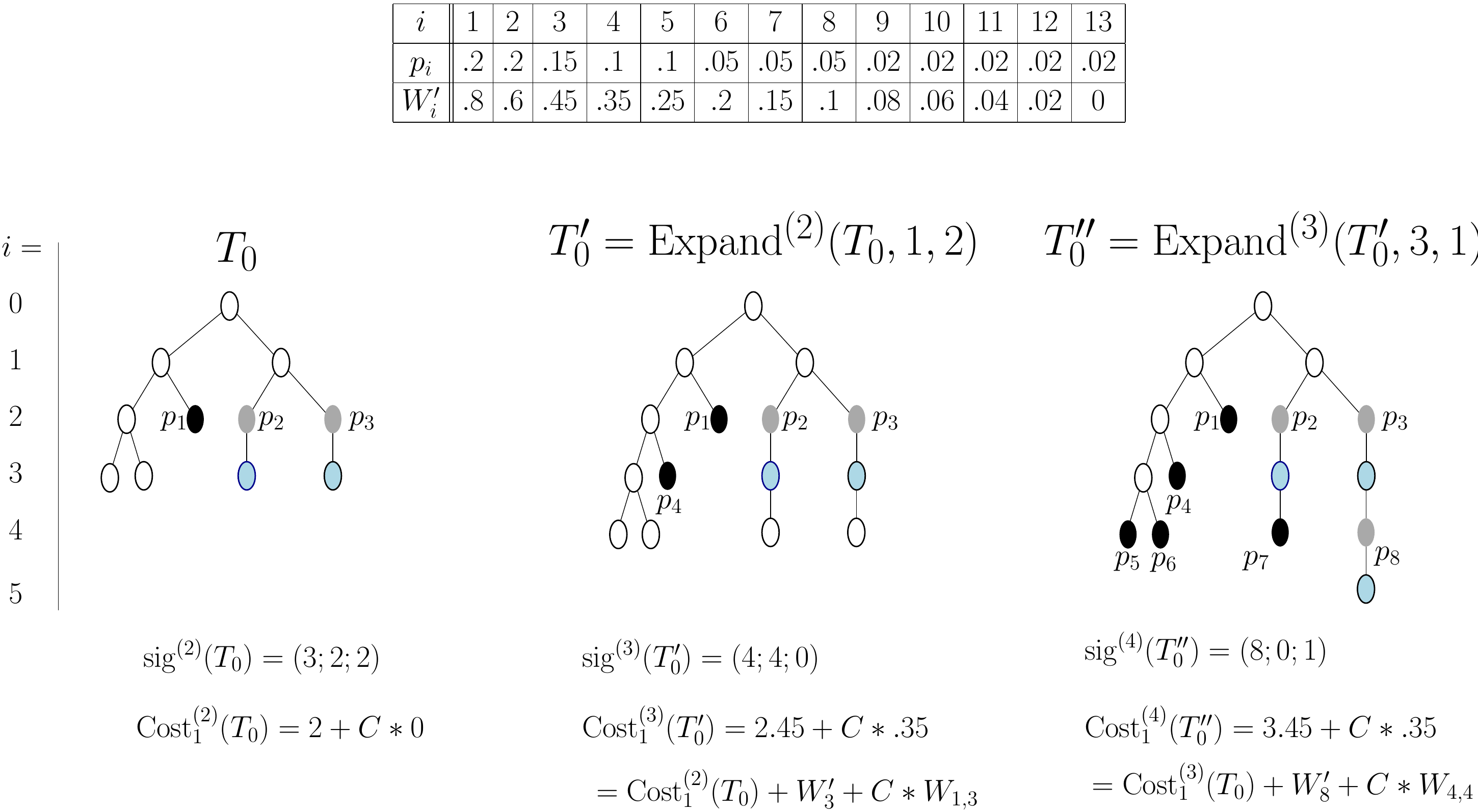}
  \caption{ Illustrations of the $\Trunc$ and $\Expand$ operations and Lemma \ref{lem:expand}. The $p_i$,  $i=1,\ldots,13,$ are given in the table above the trees.
  As examples of the $\Trunc$ operation note that $\Trunc^{(2)}(T''_0) = \Trunc^{(2)}(T'_0) = T_0$ and $\Trunc^{(3)}(T''_0)= T'_0.$
  }
  \label{fig:Expand}
  \end{figure*}

Suppose
$T_s\in \bar {\mathcal{T}}_s(n),$  
with $\depth(T) = d.$  An interesting peculiarity of this definition is that $T$ is an $i$ level tree for all $i \ge d-1$ and it is quite possible that 
$$\Cost_0^{(d-1)}(T_0)  < \Cost_0^{(d)}(T_0) <  \Cost_0^{(d+1)}(T_0) < \cdots
$$
for some indeterminate length chain. The important observation though, is that $\Cost_s^{(i)}(T)$ collapses to $\Cost_s(T)$ for the interesting cases.
\begin{lemma}\ 
\label{lem:levelind}
\begin{itemize}
\item[(a)] Let $T_s\in  {\mathcal{T}}_s(n),$  
with $\depth(T_s) = d.$   \\Then  $\sig^{(d)}(T_s) =(n;0;0)$  and  
$\Cost_s^{(d)}(T_s) = \Cost_s(T_s).$
\item[(b)] Let $T_s\in \bar {\mathcal{T}}_s(n)$ be an $i$-level tree with $\sig^{(i)}(T_s) =(n;0;0)$.\\
%Let $T'_s$ be $T$ with all nodes at depth $i+1$ removed. 
Then $T_s \in  {\mathcal{T}}_s(n)$ with $depth(T_s) =i$.  
%$\sig^{(i)}(T'_s) = (n;0;0)$ and  $\Cost^{(i)}(T'_s) = \Cost^{(i)}(T_s) .$
\end{itemize}
\end{lemma}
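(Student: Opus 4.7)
}
My plan is to prove both parts by direct unpacking of the definitions of $\sig^{(i)}$ and $\Cost_s^{(i)}$, and to exploit the two structural rules that (i) a master node forces a slave grandchild two levels below it, and (ii) slave nodes only arise as children of master nodes.

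For part (a), fix $T_s\in\mathcal{T}_s(n)$ with $\depth(T_s)=d$. I would compute the three signature components one at a time. Since every codeword lies at depth at most $d$, we get $m=|\{j:d_T(j)\le d\}|=n$. Because no node sits at depth $d+1$, we get $p=0$. For $z$, observe that a master node at depth $d$ would force a slave grandchild at depth $d+2$, contradicting $\depth(T_s)=d$; hence $z=0$. For the cost, $W'_n=1-W_n=0$ kills the $iW'_m$ term, while $m-z=n$ collapses the restricted sum over master indicators to a full sum over all indices. Recognising $\sum_{i=1}^n m_T(i)p_i=q_1(T)$ and $\sum_{i=1}^n \ell_T(i)p_i=q_0(T)$, the two definitions of $\Cost_s^{(d)}$ reduce verbatim to $\Cost_0(T)=L(T)+Cq_1(T)$ and $\Cost_1(T)=L(T)-Cq_0(T)$, as required.

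For part (b), suppose $T_s\in\bar{\mathcal{T}}_s(n)$ is $i$-level with $\sig^{(i)}(T_s)=(n;0;0)$. First, $m=n$ says that every one of the $n$ source symbols has been assigned to a node on or above level $i$, so $T_s$ has no missing codewords and is therefore a full AIFV-$2$ coding tree, i.e.\ $T_s\in\mathcal{T}_s(n)$. Next I would rule out any node at level $i+1$: by definition, such a node is either a non-slave node (leaf, complete internal, or master), of which there are $p=0$, or a slave node. But a slave node at level $i+1$ must have a master parent at level $i$, and the count of such masters is $z=0$. Hence level $i+1$ is empty, so $\depth(T_s)\le i$; combined with the $i$-level labelling (and the fact that the signature at any strictly smaller level would not match the stated one), $\depth(T_s)=i$.

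The routine half is the arithmetic collapse in (a). The only subtle point I expect is the depth-equality in (b): one has to be careful that $p=0$ at level $i+1$ plus $z=0$ at level $i$ really excludes all deeper nodes, not merely non-slave ones, and this is exactly where the master-implies-grandchild rule and the slave-only-under-master rule are indispensable. I would make that exclusion explicit, since the whole usefulness of the DP recurrence in \Cref{sec:The DP} hinges on $(n;0;0)$ being a faithful terminator of the signature chain.
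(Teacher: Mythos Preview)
Your proposal is correct and follows essentially the same route as the paper: both parts are proved by direct unpacking of the signature and cost definitions, using that master nodes force slave children one level below. Two very minor remarks: in (a) the slave is the \emph{child} of the master (at depth $d+1$), not the grandchild, though either already contradicts $\depth(T_s)=d$; and in (b) your argument (like the paper's) really only yields $\depth(T_s)\le i$, but this is all that is needed downstream.
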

\begin{proof}
(a)  By definition, $T_s$ is a $d$-level tree  with no nodes on level $d+1.$

Let $(m,p,z) = \sig^{(d)}(T_s)$. Since $T_s$ contains $n$ codewords, $m =n.$  $T_s$ contains no nodes on level $d+1$, so $p=0.$  Furthermore,  it contains no slave nodes on level $d+1$ so it contains no master nodes on level $d,$ i.e., $z=0.$

Since $W'_n  =0$,
$$
\Cost_0^{(d)}(T_0) =  d W'_n + \sum_{i=1}^n  d_T(i) p_i+ C \sum_{i=1}^n  m_T(i)  p_i= \Cost_0(T_0).
$$
Similarly
$$
\Cost_1^{(d)}(T_1) =d W'_n +  \sum_{i=1}^n  d_T(i) p_i -  C \sum_{i=1}^n  \ell_T(i) p_i
= \Cost_1(T_1).
$$

(b)  $T_s$   contains  no master nodes on level $i$ so  it contains no slave nodes on level $i+1$. It also contains no non-slave nodes on level $i$.  So it contains no nodes on level $i$ and $\depth(T_s) =i.$ $T_s \in  {\mathcal{T}}_s(n)$ by definition.
%
%If $\depth(T_s) =i$ then $T'_s = T_s$ and the lemma is trivially correct.  
%
%Otherwise, $\depth(T_s) = i+1$.  Because $T_s$ contained no master nodes on level $i$, it contains no slave nodes on level $i+1$ so the only nodes on level $i+1$ are children of internal nodes on level $i$.  Removing these nodes from  level $i+1$ transforms their parents on level $i$ into unused leaves.  This does not change the signature, so $\sig^{(i)}(T'_s) = \sig^{(i)}(T_s)$.  By construction,  $\depth(T'_s) =i$ and  $\Cost^{(i)}(T'_s) = \Cost^{(i)}(T_s) $
%and  $T'_s \in  {\mathcal{T}}_s(n)$.
\end{proof}

The definitions and lemmas immediately imply
\begin{corollary}
\label{cor:eqopt}
%$$%T_s(C) = \argmin_{i \ge s; T_s \in \bar{ \mathcal{T}}(i:n)} 
%T_s(C) = \argmin_{\substack { i \ge s\\  T_s \in \bar{ \mathcal{T}}(i:n) \\ \sig^{(i)}(T_s) = (n;0;0)}}
%%{i \ge s; T_s \in \bar{ \mathcal{T}}(i:n)} 
%\Cost^{(i)}_s(T_s)
%$$
\begin{equation}
\label{eq:eqopt}
%T_s(C) = \argmin_{i \ge s; T_s \in \bar{ \mathcal{T}}(i:n)} 
T_s(C) = \argmin_{\substack { T_s \in \bar{ \mathcal{T}}(n) \\ \exists i \mbox{ s.t. } T_s \in \bar{ \mathcal{T}}(i:n)  \mbox{ and } \sig^{(i)}(T_s) = (n;0;0)}}
\Cost^{(i)}_s(T_s)
\end{equation}
\end{corollary}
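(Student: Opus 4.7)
The plan is to show that the set of trees being minimized over on the right-hand side of \Cref{eq:eqopt} is essentially the same as $\mathcal{T}_s(n)$ (via the restriction justified by \Cref{lem:DPStructure}), and that on this common set the cost functions $\Cost_s$ and $\Cost_s^{(i)}$ agree. The two halves of \Cref{lem:levelind} are tailored exactly to give these two directions, so the corollary should reduce to a bookkeeping step that glues these lemmas to the definitions of $T_s(C)$ in \Cref{eq:argminT0,eq:argminT1}.

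First I would argue the forward inclusion. Take any $T_s \in \mathcal{T}_s(n)$ satisfying (P1), (P2), and set $d = \depth(T_s)$. By \Cref{lem:levelind}(a), $T_s$ viewed inside $\bar{\mathcal{T}}_s(n)$ is a $d$-level partial tree with $\sig^{(d)}(T_s) = (n;0;0)$, so it lies in the feasible set on the right-hand side of \Cref{eq:eqopt}, and its $d$-level cost satisfies $\Cost_s^{(d)}(T_s) = \Cost_s(T_s)$. Hence the right-hand minimum is at most the left-hand minimum.

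For the reverse inclusion, let $T_s \in \bar{\mathcal{T}}_s(n)$ be any partial tree that is $i$-level with $\sig^{(i)}(T_s) = (n;0;0)$. \Cref{lem:levelind}(b) asserts $T_s \in \mathcal{T}_s(n)$ and $\depth(T_s) = i$; combining with \Cref{lem:levelind}(a) applied at $d = i$ then gives $\Cost_s^{(i)}(T_s) = \Cost_s(T_s)$. Thus every feasible $T_s$ on the right-hand side is a genuine complete tree in $\mathcal{T}_s(n)$ with the same cost under $\Cost_s^{(i)}$ as under $\Cost_s$, so the left-hand minimum is at most the right-hand minimum.

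Combining both inclusions, the feasible sets are in cost-preserving bijection, so the argmin agrees. The one subtle point—indeed the only step that is not purely definitional—is the peculiarity noted before \Cref{lem:levelind} that a given partial tree $T_s$ of depth $d$ is technically $i$-level for every $i \ge d-1$, so one must be careful that the existential quantifier in \Cref{eq:eqopt} (``there exists $i$ such that \ldots'') is not ambiguous. This is resolved by observing that if $\sig^{(i)}(T_s) = (n;0;0)$ holds for some $i$ at which $T_s$ is $i$-level, then by \Cref{lem:levelind}(b) we must have $i = \depth(T_s)$, so the witnessing $i$ is unique and equals $d$, making the correspondence with $\mathcal{T}_s(n)$ unambiguous. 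With this clarified, the corollary follows directly from \Cref{lem:DPStructure} and \Cref{lem:levelind}.
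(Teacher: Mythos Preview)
Your proposal is correct and matches the paper's approach: the paper states only that ``the definitions and lemmas immediately imply'' the corollary, and your argument spells out precisely the intended reasoning, namely the two-way inclusion via parts (a) and (b) of \Cref{lem:levelind}. Your observation that the witnessing $i$ is forced to equal $\depth(T_s)$ is a nice clarification of a point the paper leaves implicit.
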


The next definition introduces the initial conditions for the dynamic programs.
\begin{definition} 
\label{def:firstlev}
See \Cref{fig:Initials}.
Set
\begin{eqnarray*}
%I^0 &=& \{(0;2;0)\},\\
%I_0 &=& \{(0;4;0),\,  (1;2;0),\,   (1;2;1),\, (2;0,1) ,\, (2;0,2)\},\\
I_0 &=& \{(0;2;0),\, (1;0,1)\},\\
I_1 &=&\{ (0;3;0),\, (1;1;0),\,  (1;1;1)\}.
\end{eqnarray*}
 Note that if $(m;p;z) \in I_0$,  there exists a unique $0$-level tree $T_s \in  \bar {\mathcal{T}}_0(n)$ satisfying
$\sig^{(0)}(T_0)  = (m;p;z).$ 

Similarly,  if $(m;p;z) \in I_1$,  there exists a unique $1$-level tree $T_s \in  \bar {\mathcal{T}}_1(n)$ satisfying
$\sig^{(1)}(T_1)  = (m;p;z).$ 

Let 
 $T_s(m;p;z)$ denote this unique tree and $\bar c_s(m;p;z) = \Cost_s^{(s)}(T_s(m;p;z)).$
\end{definition}

\begin{figure*}[t]
 \center
\hspace*{-.4in}\includegraphics[width=5.5in]{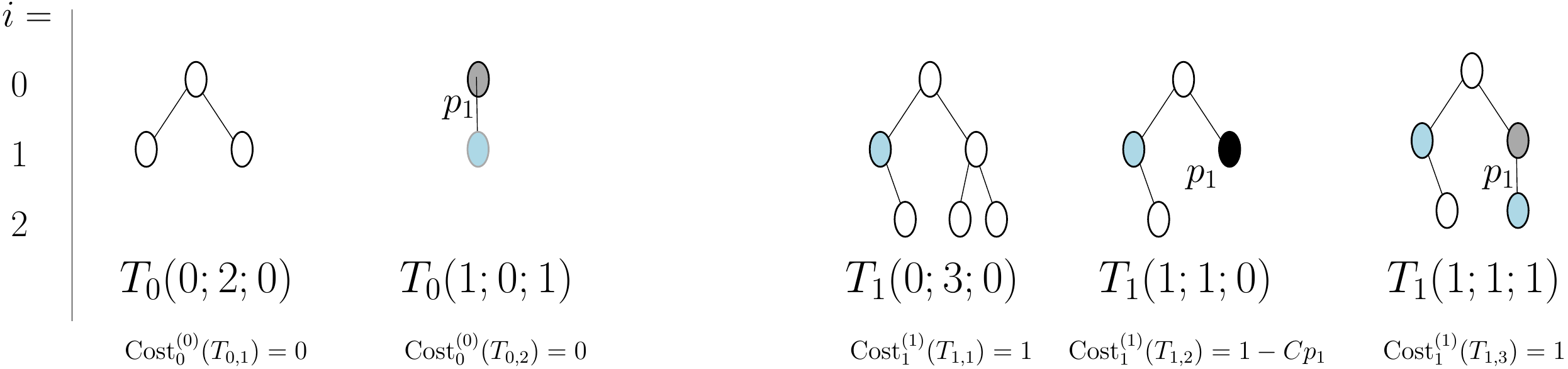}
  \caption{The initial trees introduced in Definition \ref{def:firstlev}.   % Blue, grey and Bue nodes are slave nodes while grey and black nodes are, respecively, master and leaf nodes (with associated codewords).
  Note that the definition of $T_0$ trees permit the root to be a master node or an internal node, while the definition of $T_1$ trees requires that the root be an internal node.
  %Note that all 5 $T_0$ trees satisfy $\Cost_1(T_0(m;p;z))=1.$
  }  \label{fig:Initials}
   \end{figure*}

The following lemma is true by observation
\begin{lemma}
\label{lem:firstlev}  Let $n > 2$.\\
If $T_0 \in  \bar {\mathcal{T}}_0(n)$ with  $\depth(T_0) \ge 0$, then
$\sig^{(0)}(T_0)  \in I_0.
$\\
If $T_1 \in  \bar {\mathcal{T}}_1(n)$ with  $\depth(T_1) \ge 1$, then
$\sig^{(1)}(T_1)  \in I_1.
$
\end{lemma}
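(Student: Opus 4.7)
The plan is to prove both statements by a direct case analysis on the top-of-tree structure, simply reading off $\sig^{(0)}(T_0)$ or $\sig^{(1)}(T_1)$ from its definition in each case. The depth hypotheses serve only to exclude the trivial one-node tree; for $T_1$ the hypothesis $\depth(T_1) \ge 1$ is automatic, since the defining rules force the root of any $T_1$ to be a complete internal node.

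For the $T_0$ statement, first observe that the root of any $T_0$ tree cannot itself be a slave node, since slaves only occur as children of master nodes or as the exceptional left child of the root of $T_1$. Hence the root of $T_0$ is either a leaf, a master node, or a complete internal node; the single-leaf case is the only one compatible with $\depth(T_0) = 0$ and is excluded by the hypothesis. If the root is a complete internal node, both of its children live at level $1$ and are non-slave, yielding $\sig^{(0)}(T_0) = (0;2;0)$. If the root is a master node, it is itself the only codeword at depth $\le 0$ while its unique slave child at level $1$ contributes $0$ to $p$, yielding $\sig^{(0)}(T_0) = (1;0;1)$. Both triples belong to $I_0$.

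For the $T_1$ statement, the tree-structure rules force the root to be complete internal with a slave as its left child at level $1$. The ``unique exception'' clause for incomplete internal nodes, combined with the prohibition of any ``$00$'' grandchild of the root, forces this slave to have exactly one child, located at position ``$1$'', which is necessarily non-slave; consequently the left subtree always contributes exactly $1$ to the number of non-slave nodes at level $2$. The right child of the root at level $1$ may be a leaf, a master node, or a complete internal node, and these three sub-cases contribute respectively $0$, $0$, or $2$ additional non-slave nodes at level $2$, while producing $(m,z) = (1,0)$, $(1,1)$, and $(0,0)$. The three resulting signatures $(1;1;0)$, $(1;1;1)$, and $(0;3;0)$ are precisely the elements of $I_1$.

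The analysis is otherwise routine; the only point requiring care is the exception clause for the slave left of the root of $T_1$, since overlooking its mandatory ``$1$''-child would strip the ``$+1$'' common to the $p$-coordinate of every element of $I_1$.
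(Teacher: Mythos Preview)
Your case analysis is precisely what the paper means when it writes ``true by observation'' (that sentence is the entirety of the paper's proof), and the breakdown for both the $T_0$ and the $T_1$ statements is correct.

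One small correction: you assert that the depth hypothesis excludes the single-leaf $T_0$ tree, but the stated hypothesis is $\depth(T_0)\ge 0$, which a one-node tree also satisfies (its depth is $0$). So the hypothesis as printed does not rule out a root that is itself a leaf, whose level-$0$ signature would be $(1;0;0)\notin I_0$. This is really an imprecision in the lemma's statement rather than in your argument: the figure accompanying \Cref{def:firstlev} makes explicit that the intended reading is that the root of a partial $T_0$ tree is either a master node or a complete internal node, and indeed the only partial trees the DP ever encounters are truncations $\Trunc^{(0)}(T)$ of trees carrying more than one codeword, for which the root cannot be a leaf. Just adjust the sentence about what the depth hypothesis accomplishes (or dismiss the leaf-root case on the grounds that it never arises in the applications via \Cref{cor:interp}).
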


 \begin{Note}
  The reason for starting with  $\sig^{(1)}(T_1)$ instead of  $\sig^{(0)}(T_1)$ is because the 
 root of  a $T_1$ tree  is ``unusual'', being  a complete node with a slave child, the only time this combination can occur.
  % This is the only case of a slave child having a non-master parent.  
    By definition, $\sig^{(0)}(T_1)=(0;1;0)$. This is misleading because it loses the information about the unusual slave node on level $1$.  We therefore only start looking  at signatures of $T_1$ trees from level $1$.  
   %For  notational consistency,  we  use the same convention for $T_0$ trees as well.
\end{Note}

\begin{definition} See \Cref{fig:Expand}.
Let $T' \in \bar{\mathcal{T}}(i:n)$ %be an $i$-level tree 
 satisfy $\sig^{(i)}(T')=(m';p';z')$
and
 \begin{equation}
 \label{eq:pdefexXX}
 e_0,\,e_1 \ge 0 \quad \mbox{ \rm such that }\quad  e_0 + e_1 \le p'.
 \end{equation} 
%Let  $0\leq e_0,e_1$  and   $e_0 + e_1 \le p'.$ 
 Define  the {\em $(e_0,e_1)-$expansion} of $T'$ 
as the unique  tree  $$T=\Expand^{(i)}(T', e_0, e_1)$$
in which
\begin{itemize}
\item the  first $i$-levels of $T$ are identical to those of $T'.$
\item  $e_0$ of the $p'$ non-slave nodes on level $i+1$ of $T'$ are set as leaves associated with $a_{m'+1}.\ldots,a_{m'+e_0}.$
\item  $e_1$  non-slave nodes on level $i+1$ of $T'$ are set as master nodes associated with $a_{m'+e_0+1}.\ldots,a_{m'+e_0+e_1}$
(with corresponding slave nodes created on level $i+2$).
\item the remaining $p'-e_0-e_1$ non-slave nodes on level $i+1$ of $T'$ become complete internal nodes, creating $2(p'-e_0-e_1)$ non-slave  nodes on level $i+2$. These are in addition to the $z'$ non-slave children on level $i+2$ of the $z'$ slave nodes on level $i+1.$ 
\end{itemize}
Note that this definition implies that $\sig^{(i+1)}(T) = (m;p;z)$ where
 \begin{eqnarray}
 m  &=& m'+e_0+e_1, \label{eq:pdefmxX}\\
 z   &=& e_1,\label{eq:pdefzX}\\
 p &=&  z' + 2(p'-e_0-e_1). \label{eq:pdefpX}
 \end{eqnarray}
\end{definition}

\begin{lemma}
\label{lem:TruncExpand} \ 
%Let $T' \in \bar{\mathcal{T}}(i:n)$  %be an $i$-level tree 
%and
%$T \in \bar{\mathcal{T}}(n).$
% and, for $i >0,$   $(m^{(i)};p^{(i)};z^{(i)}) = \sig^{(i)}(\Trunc^{(i)}(T)).$ Then
\begin{itemize}
\item[(a)] Let  $T' \in \bar{\mathcal{T}}(i:n)$.  If $\sig^{(i)}(T') = (m';p';z')$ and  $(e_0,e_1)$ satisfies \Cref{eq:pdefexXX}, then %  Equations (\ref{eq:pdefe})-(\ref{eq:pdefp}),  then
$$T'= \Expand^{(i)} 
\left(
T'',\, e_0, \, e_1
\right) \in \bar{\mathcal{T}}(i+1:n).
$$

\item[(b)]Let  $T \in \bar{\mathcal{T}}(n).$  For $i \ge 0,$   set $\left(m^{(i)};p^{(i)};z^{(i)}\right) = \sig^{(i)}\left(\Trunc^{(i)}(T)\right).$ Then
$$\Trunc^{(i+1)}(T) = \Expand^{(i)} 
\Bigl(
\Trunc_{i}(T),\, e_0, \, e_1
\Bigr)
$$
where $e_0 =m^{(i+1)} - m^{(i)} -z^{(i+1)}$ and $e_1 = z^{(i+1)}.$
\end{itemize}
\end{lemma}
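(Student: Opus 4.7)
The plan is to verify each part directly from the definitions, since both statements are essentially bookkeeping assertions linking $\sig^{(i)}$, $\Trunc^{(i)}$, and $\Expand^{(i)}$. For (a), given $T' \in \bar{\mathcal{T}}(i:n)$ with $\sig^{(i)}(T') = (m';p';z')$ and $(e_0,e_1)$ satisfying \Cref{eq:pdefexXX}, I would form $T = \Expand^{(i)}(T',e_0,e_1)$ and check each condition in the definition of a partial AIFV tree in $\bar{\mathcal{T}}(i+1:n)$. Because the first $i$ levels of $T$ are identical to those of $T'$, the structural conditions there (and the type, $T_0$ or $T_1$, inherited from the root) are automatic. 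For levels $i+1$ and $i+2$, the $e_0$ new leaves carry symbols $a_{m'+1},\dots,a_{m'+e_0}$; the $e_1$ new master nodes carry $a_{m'+e_0+1},\dots,a_{m'+e_0+e_1}$ with a single slave child apiece on level $i+2$; the remaining $p'-e_0-e_1$ non-slave nodes become complete internal nodes producing $2(p'-e_0-e_1)$ non-slave children on level $i+2$; and each of the $z'$ slave nodes on level $i+1$ gains its unique ``0''-child on level $i+2$. These are precisely the structural rules for complete/incomplete internal nodes and master/slave nodes, so $T$ is a valid partial AIFV tree.

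Still for (a), I then verify properties (P1) and (P2) and the $(i+1)$-level condition. Property (P1) holds because $T'$ satisfies it and the newly assigned symbols $a_{m'+1},\dots,a_{m'+e_0+e_1}$ all sit at depth $i+1$, which is strictly larger than the depths of the already-assigned symbols in $T'$. Property (P2) on level $i+1$ follows from the order in which $\Expand^{(i)}$ appends symbols: leaves first, then master nodes. Finally $\depth(T) \le i+2$ is immediate from the construction, so $T \in \bar{\mathcal{T}}(i+1:n)$.

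For (b), the key observation is that $\Trunc^{(i+1)}(T)$ and $\Trunc^{(i)}(T)$ agree up through level $i$, so the difference between them lies entirely in how the $p^{(i)}$ non-slave nodes on level $i+1$ of $\Trunc^{(i)}(T)$ are resolved in $T$. Define $e_0$ and $e_1$ to be, respectively, the number of leaves and the number of master nodes on level $i+1$ of $T$. By (P1) applied to $T$, the corresponding symbols lie in $\{a_{m^{(i)}+1},\dots,a_{m^{(i+1)}}\}$, and by (P2) the leaves precede the master nodes in index order, which is exactly the labelling convention of $\Expand^{(i)}$. From the definition of the signature, the $e_1$ master nodes on level $i+1$ are counted by $z^{(i+1)}$, giving $e_1 = z^{(i+1)}$; and $e_0 + e_1 = m^{(i+1)} - m^{(i)}$, giving $e_0 = m^{(i+1)} - m^{(i)} - z^{(i+1)}$. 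Comparing level-by-level against the construction in $\Expand^{(i)}$ then yields $\Trunc^{(i+1)}(T) = \Expand^{(i)}(\Trunc^{(i)}(T), e_0, e_1)$.

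The hard part, such as it is, is purely the bookkeeping: one must be careful that the $z'$ slave-node children contributed on level $i+2$ are not double-counted with the $2(p'-e_0-e_1)$ non-slave children coming from the newly completed internal nodes, and that the special structure at the root of a $T_1$ tree is already encoded in $T'$ before expansion begins (which is guaranteed here, since the relevant initial conditions place $T_1$ trees into $\bar{\mathcal{T}}(1:n)$, so all invocations of $\Expand^{(i)}$ on a $T_1$ tree use $i \ge 1$). Once this counting is tracked, both parts reduce to unpacking \Cref{def:firstlev} and the $\Expand^{(i)}$ construction against properties (P1) and (P2).
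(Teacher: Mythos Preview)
Your proposal is correct and follows the same approach as the paper, which simply observes that (a) holds because the $\Expand^{(i)}$ construction preserves properties (P1) and (P2) of Lemma~\ref{lem:DPStructure} and yields depth at most $i+2$, and that (b) follows directly from the definitions. You have spelled out in detail exactly the bookkeeping that the paper's one-line proof leaves implicit; the extra remarks about double-counting of slave-node children and about the $T_1$ root (handled via the $i\ge 1$ initialization) are accurate but go beyond what the paper records.
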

\begin{proof}
(a) follows from the fact that Definition \ref{eq:pdefe} maintains the validity of properties (P1) and (P2) of Lemma \ref{lem:DPStructure} and that $\depth(T'') \le i+2.$
(b) just follows directly from the definitions.
\end{proof}

Part (b) implies that any  tree $T \in \bar{\mathcal{T}}(n)$ can be grown level by level via expansion operations.

%*************  Refer to Signature Set Definition *****************
Now recall from
\Cref{def:Exp1}
the definition of the signature set $\CALS_n$ and the operation $\rightarrow$.
\begin{lemma}
\label{lem:expand}
Let 
$T' \in \bar{\mathcal{T}}(i:n)$  %be an $i$-level tree 
with  $\sig^{(i)}(T') = \alpha' = (m';p';z')$.% and
%$T$ be an $(i+1)$-level tree with  $\sig_{i+1}(T) = \alpha' = (m;p;z)$. Then

\begin{itemize}

\item[(a)] 
Let $(e_0,e_1)$ satisfy \Cref{eq:pdefe}.\\ % $e_0,e_1$ satisfy
%Equations (\ref{eq:pdefe})-(\ref{eq:pdefp}),
 %\Cref{eq:pdefe}, %for some $0 \le e_0,e_1$ and $e_0 + e_1 \le p$, 
Let $T=\Expand^{(i)}(T', e_0, e_1)$ and 
$\alpha=(m;p;z) = \sig^{(i+1)}(T).$

Then
$ \alpha' \rightarrow \alpha$.

\item[(b)] Let $\alpha=(m;p;z)$.    If $ \alpha' \rightarrow \alpha$,  let 
$e_0,e_1$ be the unique values satisfying  Equations (\ref{eq:pdefe})-(\ref{eq:pdefp}) and set $T=\Expand^{(i)}(T', e_0, e_1)$.

Then $\alpha=\sig^{(i+1)}(T)$.

%\item[(a)] $T=\Expand_i(T', e_0, e_1)$ for some 
% some $0 \le e_0,e_1$ and $e_0 + e_1 \le p$ if and only if 
%$ \alpha' \rightarrow \alpha$.
%
%Furthermore,  if $ \alpha' \rightarrow \alpha$ there exists a unique pair $e_0,e_1$ such that $T=\Expand_i(T', e_0, e_1)$.
%

%Then $(m';p';z')$, $(m;p;z)$ satisfy Equations (\ref{eq:pdefm})-(\ref{eq:pdefp})

%  Then 
%$$\sig_{i+1}(T)=(m'+e_0+e_1; e_{1}; z'+2(p'-e_0-e_1)).$$
\item[(c)]   If $T=\Expand^{(i)}(T', e_0, e_1)$ with   $\alpha=(m;p;z) = \sig^{(i+1)}(T)$, then
  %if they are type-$0$ trees then 
$$
\Cost_0^{(i+1)}(T) = \Cost_0^{(i)}(T') + c_0(\alpha',\,  \alpha) % W'_{m'} + C W_{m'-z',m'},
$$
%while if  they are type-$1$ trees then 
$$
\Cost_1^{(i+1)}(T) = \Cost_1^{(i)}(T') + c_1(\alpha',\,  \alpha) %W'_{m'} - C W_{m',m-z}.
$$
\end{itemize}

\end{lemma}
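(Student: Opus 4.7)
Parts (a) and (b) are essentially a matching of definitions. The $\Expand$ construction is set up precisely so that the signature $(m;p;z)$ of $T = \Expand^{(i)}(T', e_0, e_1)$ satisfies Equations (\ref{eq:pdefmxX})--(\ref{eq:pdefpX}), which are verbatim the Equations (\ref{eq:pdefm})--(\ref{eq:pdefp}) appearing in the definition of $\alpha' \to \alpha$; the admissibility clause (\ref{eq:pdefexXX}) on $(e_0,e_1)$ is identical to (\ref{eq:pdefe}). For (a) I would simply read off $\alpha = \sig^{(i+1)}(T)$ from the definition of $\Expand$ and verify that the given $(e_0, e_1)$ witnesses $\alpha' \to \alpha$, invoking the nondegeneracy clause of \Cref{lem:2Deqiv} to handle the degenerate case $(p',z') = (0,0)$. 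For (b) I would extract the unique $(e_0,e_1)$ provided by \Cref{lem:2Deqiv} from the hypothesis $\alpha' \to \alpha$ and feed them into the $\Expand$ definition to obtain $\sig^{(i+1)}(T) = \alpha$.

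The substance is in (c). I would compute $\Delta_s \triangleq \Cost_s^{(i+1)}(T) - \Cost_s^{(i)}(T')$ by splitting each cost into its three constituent summands. Two observations organize the bookkeeping: first, $\Expand$ leaves the top $i$ levels of $T'$ unchanged, so $d_T(j) = d_{T'}(j)$ and $m_T(j) = m_{T'}(j)$ (hence also $\ell_T(j) = \ell_{T'}(j)$) for every $j \le m'$; second, the $e_0 + e_1 = m - m'$ new codewords, at indices $m'+1, \ldots, m$, all sit at depth $i+1$, with indices $m'+1, \ldots, m'+e_0 = m - z$ being leaves and indices $m'+e_0+1, \ldots, m$ being masters. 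Combined with the identity $W'_m = W'_{m'} - W_{m',m}$ from \Cref{def:Wdef}, the level-prefix change $(i+1)W'_m - iW'_{m'}$ and the new depth contribution $\sum_{j=m'+1}^m (i+1) p_j = (i+1) W_{m',m}$ collapse to exactly $W'_{m'}$, independently of $s$; this is the common first term of both $c_0$ and $c_1$.

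The $C$-weighted term is where the two cases separate. For $\Cost_0$, the master sum runs from $1$ to $m'-z'$ in $T'$ and from $1$ to $m-z = m'+e_0$ in $T$. By properties (P1)--(P2), the $z'$ depth-$i$ masters of $T'$ occupy indices $m'-z'+1, \ldots, m'$; $\Expand$ preserves them as depth-$i$ masters of $T$, so they now lie inside the enlarged range and contribute the newly counted mass $\sum_{j=m'-z'+1}^{m'} p_j = W_{m'-z', m'}$. The newly created indices $m'+1, \ldots, m-z$ are leaves, contributing zero. The net $\Cost_0$ increment is therefore $W'_{m'} + C\,W_{m'-z', m'} = c_0(\alpha', \alpha)$. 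For $\Cost_1$, the leaf sum already ranges up to $m'$ in $T'$, so by the depth-$\le i$ agreement only the new indices contribute, and among them exactly the $e_0$ new leaves contribute $\sum_{j=m'+1}^{m-z} p_j = W_{m', m-z}$, entering with sign $-C$. This yields $\Delta_1 = W'_{m'} - C\,W_{m', m-z} = c_1(\alpha', \alpha)$.

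The main obstacle is the $\Cost_0$ bookkeeping: one must simultaneously track the shifting upper limit $m'-z' \mapsto m-z$ of the master sum and recognize that the indices newly admitted split cleanly into the $z'$ old depth-$i$ masters of $T'$ (which contribute $W_{m'-z', m'}$) and the $e_0$ new depth-$(i+1)$ leaves (which contribute nothing). The $\Cost_1$ side is cleaner because its leaf sum already sums all the way to $m'$ in $T'$. Once this decomposition is identified, the remainder is a routine substitution using \Cref{def:Wdef} and Equations (\ref{eq:pdefmxX})--(\ref{eq:pdefpX}).
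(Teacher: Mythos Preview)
Your proposal is correct and follows essentially the same route as the paper. Parts (a) and (b) are dispatched in the paper with a single sentence each (``follows directly from the definition''), and your more explicit unpacking of the correspondence between Equations (\ref{eq:pdefe})--(\ref{eq:pdefp}) and (\ref{eq:pdefexXX})--(\ref{eq:pdefpX}) is exactly what that sentence abbreviates. For (c), the paper performs the same decomposition you describe: it uses $d_T(j)=d_{T'}(j)$ and $m_T(j)=m_{T'}(j)$ for $j\le m'$, collapses $(i+1)W'_m + (i+1)W_{m',m}$ to $(i+1)W'_{m'}$, identifies the depth-$i$ masters of $T'$ as indices $m'-z'+1,\ldots,m'$ (via (P1)--(P2)) to extract $W_{m'-z',m'}$ in the $\Cost_0$ case, and identifies the new leaves as indices $m'+1,\ldots,m-z$ to extract $W_{m',m-z}$ in the $\Cost_1$ case. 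The only cosmetic difference is that the paper expands $\Cost_s^{(i+1)}(T)$ fully and simplifies, whereas you compute the increment $\Delta_s$ directly; the bookkeeping is identical.

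One small caution: your remark about ``invoking the nondegeneracy clause of \Cref{lem:2Deqiv} to handle the degenerate case $(p',z')=(0,0)$'' in part (a) is slightly misdirected. If $(p',z')=(0,0)$ then the only admissible $(e_0,e_1)$ is $(0,0)$, the expansion is trivial, $\alpha=\alpha'$, and the conclusion $\alpha'\to\alpha$ actually \emph{fails}; \Cref{lem:2Deqiv} does not rescue this. The paper's proof is equally silent on this edge case, and in the algorithm it never arises (the initial signatures in $I_s$ all have $(p',z')\neq(0,0)$, and once one reaches $(n;0;0)$ the recursion terminates), so it is harmless---but it is not something you can ``handle'' via \Cref{lem:2Deqiv}.
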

\begin{proof} \ 

(a) %  First assume $T=\Expand_i(T', e_0, e_1)$ where $\sig_i(T) = \alpha.$ 
This follows directly from the definition of  $T=\Expand^{(i)}(T', e_0, e_1)$.

%The definition of $\Expand$ immediately implies  $m= m' + e_0 + e_1$ and  $z=e_1.$
%
%The non-slave nodes on the $(i+2)nd$ level of $T$ are either children of the $z'$ slave nodes on level $i+1$ or children of the 
%$p' - e_0 - e_1$ non-leaf, non-master nodes on level $i+1$.  Thus 
%$p= z'+2(p'-e_0-e_1))$,  Equations (\ref{eq:pdefm})-(\ref{eq:pdefp}) are satisfied and thus $ \alpha' \rightarrow \alpha$.

\medskip

(b) From the definition of  $ \alpha' \rightarrow \alpha$ there exist appropriate $e_0,e_1$ satisfying Equations (\ref{eq:pdefe})-(\ref{eq:pdefp}).
Then  $ T=\Expand^{(i)}(T', e_0, e_1)$ has  $\sig^{(i+1)}( T) = (m;p;z).$ 
%The uniqueness of $e_0,e_1$ follows from the fact that 
%Equations (\ref{eq:pdefm})-(\ref{eq:pdefz}) have a unique solution.

\medskip

(c) 
From the definitions of signatures and expansions
$$
\sum_{j=1}^{m'}  d_{T}(j)p_j = \sum_{j=1}^{m'}  d_{T'}(j)p_j
\quad\mbox{and}\quad
\sum_{j=m'+1}^{m}  d_{T}(j)p_j = (i+1) \sum_{j=m'+1}^{m} p_j  = (i+1) W_{m',m}.
$$
Furthermore, from Lemma \ref{lem:DPStructure}  (P1), (P2),
the master  nodes  on level $i$ correspond to  $a_{m'-z'+1},\ldots, a_{m'}$.  Thus (again also using the definition of expansion)
$$ 
\sum_{j=1}^{m'-z'}  m_{T}(j)p_j = \sum_{j=1}^{m'-z'}  m_{T'}(j)p_j
\quad\mbox{and}\quad  \sum_{j=m'-z'+1}^{m-z}  m_T(j) p_j  =\sum_{j=m'-z'+1}^{m-z}  p_j  = W_{m'-z',m'}.$$
Then
\begin{eqnarray*}
\Cost_0^{(i+1)}(T) &=&  (i+1) W'_{m} + \sum_{j=1}^{m}  d_{T}(j)p_j+   C\sum_{j=1}^{m-z}   m_T(j)p_j\\
			 &=&  (i+1) W'_{m}  + (i+1)  W_{m',m}  + \sum_{j=1}^{m'}  d_{T'}(j)p_j \\
			 && \hspace*{.8in} +   C  %\hspace*{-.05in}
			 \sum_{j=1}^{m'-z'}   m_{T'}(j)p_j
			   +  C  %\hspace*{-.1in}
			     \sum_{j=m'-z'+1}^{m-z} m_T(j) p_j \\
			 &=&  (i+1) W'_{m'} + \sum_{j=1}^{m'}  d_{T'}(j)p_j+   C \hspace*{-.05in}\sum_{j=1}^{m'-z'}   m_{T'}(j)p_j
			 + C  \hspace*{-.1in}\sum_{j=m'-z'+1}^{m-z} m_T(j) p_j \\
			 &=& i W'_{m'}  +   \sum_{j=1}^{m'}  d_{T'}(j)p_j  +  C \sum_{j=1}^{m'-z'}   m_{T'}(j)p_j
			  + W'_{m'} +  C W_{m'-z',m'}\\
%			 &=& \Cost^0_i(T') + W'_{m'} + C W_{m'-z',m'}\\
			 &=& \Cost_0^{(i)}(T')  +  c_0(\alpha',\,  \alpha).
\end{eqnarray*}
%Now  suppose that $T',T$  are type-$1$ trees. 

From Lemma \ref{lem:DPStructure}   (P1), (P2),
the leaves on level $i+1$ of $T$ correspond to
$a_{m'+1},\ldots, a_{m-z}$.
% and the master nodes on level $i+1$ of $T'$ correspond to
%$a_{m-z+1},\ldots, a_{m}$. Similarly Furthermore,  nodes  $a_{m'-z'+1},\ldots, a_{m'}$  are the master nodes on level $i.$ 
%Again, from the earlier  discussion,
Thus (again also using the definition of expansion)
$$  
\sum_{j=1}^{m'}  \ell_{T}(j)p_j = \sum_{j=1}^{m'}  \ell_{T'}(j)p_j
\quad\mbox{and}\quad
 \sum_{j=m'+1}^{m'} \ell_T(j)p_j  =  \sum_{j=m'+1}^{m-z} p_j  = W_{m',m-z}.$$
Then
\begin{eqnarray*}
\Cost_1^{(i+1)}(T) &=&  (i+1) W'_{m} + \sum_{j=1}^{m}  d_{T}(j)p_j  -  C \sum_{j=1}^{m}   \ell_T(j)p_j\\
			 &=&  (i+1) W'_{m} + (i+1)  W_{m',m}  + \sum_{j=1}^{m'}  d_{T'}(j)p_j-  C \sum_{j=1}^{m'}   \ell_{T'}(j)p_j
			  -   C \sum_{j=m'+1}^{m}  \ell_T(j)p_j\\
			 &=& i W'_{m'}  +   \sum_{j=1}^{m'}  d_{T'}(j)p_j -  C \sum_{j=1}^{m'}   \ell_{T'}(j)p_j
			 + W'_{m'} -  C W_{m',m-z}\\
%			 &=& \Cost^1_i(T') + W'_{m'} - C W_{m',m-z}\\
			 &=& \Cost_1^{(i)}(T')  +  c_1(\alpha',\,  \alpha).
\end{eqnarray*}
\end{proof}

Combining \Cref{lem:firstlev,lem:TruncExpand,lem:expand}  immediately imply a direct relationship between paths in the Signature Graph and building  a tree level-by-level.
\begin{corollary}  Fix $s \in \{0,1\}.$
\label{cor:interp}

\par\noindent{(a)}
Let $T  \in \bar{\mathcal{T}}_s(i:n)$  %be an $i$-level tree 
and, for all $s \le j \le i$ set   
$$T^{(j)} =\Trunc^{(j)}(T)
\quad\mbox{and}\quad
%\alpha^{(j)} = \left(m^{(j)};p^{(j)};z^{(j)}\right) = \sig^{(j)}\left(T^{(j)}\right).
\alpha^{(j)} =  \sig^{(j)}\left(T^{(j)}\right).
$$
Then
\begin{itemize}
\item   $\alpha^{(s)} \in I_s;$\quad  $T^{(s)} =  T_s\left(\alpha^{(s)} \right);$\quad $ \Cost_s\left(T^{(s)}\right) =c_s \left(\alpha^{(s)}\right);$
\item $\forall s \le j  < i$,$\quad  \alpha^{(j)}  \rightarrow  \alpha^{(j+1)}$
\item  $ \Cost_s^{(i)}\left(T^{(i)}\right) =\bar c_s \left(\alpha^{(s)}\right) + \sum_{j=s}^{i-1} c_s\left(\alpha^{(j)},\,\alpha^{(j+1)}\right)$
%\item $ \Cost_s\left(T^{(s)})$  %= c_s\left\alpha^{(s)}( \right);$ 
%   $ \Cost_s^{(i)}(T) =  \Cost_s^{(s)}(T)$
\end{itemize}

\par\noindent{(b)} Let $\left\{ \alpha^{[j]}\right\}_{j=s}^i \subset \CALS_n$ such that  $\alpha^{[s]} \in I_s$ and   for all $s \le j  < i$, $\alpha^{[j]}  \rightarrow  \alpha^{[j+1]}$.
Then there exists an $i$ level tree $T  \in \bar{\mathcal{T}}_s(n)$ such that, using the definitions from part (a),  $\alpha^{(j)} = \alpha^{[j]}.$

\end{corollary}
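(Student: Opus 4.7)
The plan is to prove both parts by induction on $i$, in each case using the lemmas already established: \Cref{lem:firstlev} to anchor the base case, \Cref{lem:TruncExpand} to connect truncations at consecutive levels via an expansion, and \Cref{lem:expand} to translate that expansion into both a ``$\rightarrow$'' transition in $\CALS_n$ and a corresponding cost increment.

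For part (a), I would induct on $i$ starting at $i=s$. In the base case $i=s$, since $T \in \bar{\mathcal{T}}_s(s:n)$ has $\depth(T) \le s+1$, \Cref{lem:firstlev} gives $\alpha^{(s)}= \sig^{(s)}(T) \in I_s$. By \Cref{def:firstlev} the signature uniquely determines an $s$-level tree in its class, so $T^{(s)}=T_s(\alpha^{(s)})$ and $\Cost_s^{(s)}(T^{(s)}) = \bar c_s(\alpha^{(s)})$ by definition. For the inductive step, assume the three bullets hold up to level $j$ with $s \le j < i$. By \Cref{lem:TruncExpand}(b) applied to $T$, $T^{(j+1)} = \Expand^{(j)}(T^{(j)}, e_0, e_1)$ for the unique values $e_0=m^{(j+1)} - m^{(j)} - z^{(j+1)}$ and $e_1 = z^{(j+1)}$, both satisfying \Cref{eq:pdefe}. \Cref{lem:expand}(a) then yields $\alpha^{(j)}\rightarrow \alpha^{(j+1)}$, and \Cref{lem:expand}(c) yields $\Cost_s^{(j+1)}(T^{(j+1)}) = \Cost_s^{(j)}(T^{(j)}) + c_s(\alpha^{(j)},\alpha^{(j+1)})$. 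Telescoping from $s$ to $i$ gives the third bullet.

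For part (b), I would construct the tree level by level. Set $T^{(s)} \triangleq T_s(\alpha^{[s]})$, which exists uniquely and sits in $\bar{\mathcal{T}}_s(s:n)$ by \Cref{def:firstlev} (using the hypothesis $\alpha^{[s]} \in I_s$). For $s \le j < i$, having built $T^{(j)} \in \bar{\mathcal{T}}_s(j:n)$ with $\sig^{(j)}(T^{(j)})=\alpha^{[j]}$, the relation $\alpha^{[j]} \rightarrow \alpha^{[j+1]}$ supplies, via \Cref{def:Exp1}, unique values $e_0,e_1$ satisfying \Cref{eq:pdefe}--\Cref{eq:pdefp}. Define $T^{(j+1)} \triangleq \Expand^{(j)}(T^{(j)}, e_0, e_1)$. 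By \Cref{lem:TruncExpand}(a) this lies in $\bar{\mathcal{T}}_s(j+1:n)$, and by \Cref{lem:expand}(b) its signature equals $\alpha^{[j+1]}$. Setting $T \triangleq T^{(i)}$, the fact that $\Trunc^{(j)}(T) = T^{(j)}$ for all $s \le j \le i$ follows because expansions only add nodes on levels strictly deeper than $j$, so the truncations of the constructed tree recover the sequence $\alpha^{[j]}$ as claimed.

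The mechanics are almost entirely bookkeeping; the only mild care is in the inductive step of part (b), where one must verify that the $T^{(j+1)}$ produced by $\Expand$ is genuinely the $(j{+}1)$-level truncation of the eventual $T$, i.e.\ that later expansions leave the first $j+1$ levels untouched. This is immediate from the first bullet of the definition of $\Expand$, which copies the first $i$ levels of $T'$ verbatim, so no real obstacle arises. I expect no step to be genuinely hard: the work was already done in establishing \Cref{lem:TruncExpand} and \Cref{lem:expand}, and this corollary just packages their repeated application along a single chain of levels.
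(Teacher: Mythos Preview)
Your proposal is correct and takes essentially the same approach as the paper, which simply states that the corollary follows immediately from combining \Cref{lem:firstlev}, \Cref{lem:TruncExpand}, and \Cref{lem:expand}; you have just spelled out the level-by-level induction that this combination entails.
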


{\em \small Note: the condition $ s \le j$ reflects the fact that, from \Cref{def:firstlev},  \Cref{lem:firstlev}  and the explanatory note following
 \Cref{lem:firstlev}, 
the initial condition for $T_0$ requires $j \ge 0$ and the initial condition for $T_1$ requires $j \ge 1.$
}

\medskip

This Corollary motivates the original definition of the  $\OPT_s(\alpha)$ tables.

\begin{lemma} 
Fix $s \in \{0,1\}$ and define initial signatures  $I_s$  with associated $\bar c_s(\alpha)$ for $\alpha \in I_s$ as in \Cref{def:firstlev}.
Let $\OPT_s(\alpha)$   and $\Pred_s(\alpha)$ 
be as introduced in \Cref{def:OPT Tables}.

Then, for all $\alpha \in \CALS_n,$
\begin{equation}
\label{eq:mainDP1}
\OPT_s(\alpha)  =
%\min\left\{\Cost^{(i)}(T_s) \,:\,  \mbox{\rm for some   } i \ge s  \ {\rm  and  } T_s  \in \bar{\mathcal{T}}_s(i:n)  \  \mbox{\rm   such that } \sig( T_s) = \alpha  \right\}.
\min\left(
\bigcup_{i \ge s} \left\{\Cost_s^{(i)}(T_s) \,:\, T_s  \in {\mathcal{T}}_s(i:n) \mbox{ \rm and }   \sig^{(i)}( T_s) = \alpha \right\}
\right).
\end{equation}
Furthermore,  an $i \ge s$ and $T_s \in {\mathcal{T}}_s(i:n)$ satisfying
\begin{equation}
\label{eq:mainDP2}
\sig^{(i)}(T_s) = \alpha
\quad\mbox{and}\quad 
\Cost_s^{(i)}(T_s)  =\OPT_s(\alpha)
\end{equation}
can be constructed in $O(i)$ time using the $\Pred_s(\,)$ entries.
%{eq:eqopt}
\end{lemma}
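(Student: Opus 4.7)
The plan is to prove both directions of the equality in \Cref{eq:mainDP1} by leveraging the level-by-level correspondence between signature sequences and partial trees encoded in \Cref{cor:interp}, and then read off the construction of the minimizing tree by tracing $\Pred_s$ pointers.

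For the ``$\le$'' direction, I would take any $T_s \in \bar{\mathcal{T}}_s(i:n)$ with $\sig^{(i)}(T_s) = \alpha$ and apply \Cref{cor:interp}(a) to obtain a signature path $\alpha^{(s)} \rightarrow \alpha^{(s+1)} \rightarrow \cdots \rightarrow \alpha^{(i)} = \alpha$ with $\alpha^{(s)} \in I_s$, along with the cost identity $\Cost_s^{(i)}(T_s) = \bar c_s(\alpha^{(s)}) + \sum_{j=s}^{i-1} c_s(\alpha^{(j)}, \alpha^{(j+1)})$. A straightforward induction along this path, using the DP recurrence in \Cref{def:OPT Tables} at each step (each $\alpha^{(j+1)}$ has $\alpha^{(j)}$ in its predecessor set), shows $\OPT_s(\alpha) \le \Cost_s^{(i)}(T_s)$. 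Taking the minimum over all such $T_s$ and all $i$ gives the inequality.

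For the ``$\ge$'' direction, I would induct on $d = 2m + p$ where $\alpha = (m;p;z)$; this is well-founded because \Cref{lem:Inc} forces $d$ to strictly increase along any $\rightarrow$-edge. If $\alpha \in I_s$, then by \Cref{def:firstlev} the unique tree $T_s(\alpha)$ is $s$-level with $\sig^{(s)}(T_s(\alpha)) = \alpha$ and $\Cost_s^{(s)}(T_s(\alpha)) = \bar c_s(\alpha) = \OPT_s(\alpha)$, so the RHS is at most $\OPT_s(\alpha)$. Otherwise, if $\OPT_s(\alpha) < \infty$, let $\alpha' = \Pred_s(\alpha)$, so $\OPT_s(\alpha) = \OPT_s(\alpha') + c_s(\alpha', \alpha)$. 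By the inductive hypothesis, there is some $i' \ge s$ and $T' \in \bar{\mathcal{T}}_s(i':n)$ with $\sig^{(i')}(T') = \alpha'$ and $\Cost_s^{(i')}(T') = \OPT_s(\alpha')$. Using \Cref{lem:expand}(b) to select the unique $(e_0, e_1)$ compatible with $\alpha' \rightarrow \alpha$ and setting $T = \Expand^{(i')}(T', e_0, e_1)$, \Cref{lem:TruncExpand}(a) guarantees $T \in \bar{\mathcal{T}}_s(i'+1:n)$ with $\sig^{(i'+1)}(T) = \alpha$, and \Cref{lem:expand}(c) gives $\Cost_s^{(i'+1)}(T) = \OPT_s(\alpha') + c_s(\alpha',\alpha) = \OPT_s(\alpha)$, completing the induction. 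The case $\OPT_s(\alpha) = \infty$ is handled by observing that no signature path from $I_s$ reaches $\alpha$, hence by \Cref{cor:interp}(a) no admissible tree has signature $\alpha$ at any level, so the right-hand minimum is over an empty set.

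For the construction statement, starting from $\alpha$ I would repeatedly follow $\Pred_s$ to obtain the chain $\alpha = \alpha^{(i)}, \alpha^{(i-1)}, \ldots, \alpha^{(s)} \in I_s$ of length at most $i - s + 1 = O(i)$. Reversing this list and invoking \Cref{cor:interp}(b), or equivalently iterating the $\Expand$ operation, reconstructs the desired tree: the base tree $T_s(\alpha^{(s)})$ is built in $O(1)$ from \Cref{def:firstlev}, and each subsequent level is obtained by one $\Expand$ with the $(e_0, e_1)$ extracted in $O(1)$ time from $(\alpha^{(j)}, \alpha^{(j+1)})$ via \Cref{eq:pdefm,eq:pdefz,eq:pdefp}, giving $O(i)$ total work. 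The main subtlety I anticipate is bookkeeping the level index $i$ carefully: $i$ is not determined by $\alpha$ alone, but is the number of $\Pred_s$-steps needed to reach $I_s$ from $\alpha$, and one must check that the inductive construction produces a tree whose signature-level $i$ matches exactly this path length, which is exactly what \Cref{lem:expand}(c) and \Cref{lem:TruncExpand}(a) were designed to guarantee.
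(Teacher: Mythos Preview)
Your proposal is correct and follows essentially the same route as the paper. The paper's proof invokes the shortest-path interpretation of $\OPT_s$ stated after \Cref{def:OPT Tables} and then appeals to \Cref{cor:interp}(a) and (b) to identify paths with trees in both directions; you unroll this into two explicit inductions (one along a signature path from \Cref{cor:interp}(a), one on $d=2m+p$ using $\Pred_s$ together with \Cref{lem:expand} and \Cref{lem:TruncExpand}), but the underlying bijection and the key lemmas used are identical, as is the construction of the optimal tree by back-tracing $\Pred_s$ and applying \Cref{cor:interp}(b).
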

\begin{proof}
Recall the interpretation of   $OPT_s(\alpha)$ given after \Cref{def:OPT Tables}.  Consider the $\alpha$ as nodes in a directed graph 
with edge costs defined by $c_s(\alpha',\,  \alpha)$ except  that edges from $(0;0;0)$ to $\alpha \in I_s$  have cost 
$\bar c_s(\alpha)$ and all other undefined edge costs are set to $\infty.$  Then $ \OPT_s(\alpha)$ is just the cost of the shortest path from $(0;0;0)$ to $\alpha$.

\Cref{cor:interp}(a)  then implies that if $T_s  \in \bar{\mathcal{T}}_s(i:n)$ with $\sig^{(i)}(T_s) = \alpha,$ then there exists a path from  $(0;0;0)$ to $\alpha$ with cost $\Cost_s^{(i)}(T_s).$

In the other direction, \Cref{cor:interp}(b) implies that if $P$ is a $i$-edge path from $(0;0;0)$ to $\alpha$, then there exists $T_s  \in \bar{\mathcal{T}}_s(i:n)$ with $\sig^{(i)}(T_s) = \alpha,$  and
$\Cost_s^{(i)}(T_s)$ equal to the cost of the path.

This proves \Cref{eq:mainDP1}.

The actual tree $T_s$  satisfying  \Cref{eq:mainDP1} can be found by following the  $\Pred_s(\,)$ values backwards from $\alpha$ until reaching $\alpha' \in I_s$.  This provides a path from $(0;0;0)$ to $\alpha$ with cost $\OPT_s(\alpha)$. This path can be translated into $T_s$  via \Cref{cor:interp}(b). 
\end{proof}
\Cref {cor:eqopt}  then immediately implies
\begin{corollary}
Fix $s \in \{0,1\}$.   Then
$$\min_{T_s \in \mathcal{T}_s(n)} \left\{\Cost_s(T_s:C) \right\} = \OPT_s(n;0;0).$$
Furthermore,  if $i \ge s$ and $T_s \in {\mathcal{T}}_s(i:n)$ are such that $Cost_s^{(i)}(T_s)  =\OPT_s(n;0;0)$, then $T_s(C) = T_s.$
\end{corollary}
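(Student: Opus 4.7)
The plan is to chain together the preceding lemma (characterizing $\OPT_s(\alpha)$ as the minimum cost over trees with signature $\alpha$) with \Cref{cor:eqopt} (which reformulated the optimization over $\mathcal{T}_s(n)$ as an optimization over partial trees with terminal signature $(n;0;0)$), using \Cref{lem:levelind} as the bridge between the two formulations.

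First I would specialize the previous lemma at $\alpha = (n;0;0)$ to obtain
\begin{equation*}
\OPT_s(n;0;0) = \min \bigcup_{i \ge s} \left\{ \Cost_s^{(i)}(T_s) \,:\, T_s \in \bar{\mathcal{T}}_s(i:n) \mbox{ and } \sig^{(i)}(T_s) = (n;0;0) \right\}.
\end{equation*}
Next I would observe that \Cref{lem:levelind}(b) tells us each tree $T_s$ appearing in this union actually lies in $\mathcal{T}_s(n)$ with $\depth(T_s) = i$, and \Cref{lem:levelind}(a) then guarantees $\Cost_s^{(i)}(T_s) = \Cost_s(T_s)$. Conversely, given any $T_s \in \mathcal{T}_s(n)$ with $\depth(T_s) = d$, \Cref{lem:levelind}(a) shows that $T_s \in \bar{\mathcal{T}}_s(d:n)$, $\sig^{(d)}(T_s) = (n;0;0)$, and $\Cost_s^{(d)}(T_s) = \Cost_s(T_s)$, so $T_s$ contributes to the union above. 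Thus the union is (up to cost-preserving identification) exactly the set $\{\Cost_s(T_s) : T_s \in \mathcal{T}_s(n)\}$, and combining with \Cref{cor:eqopt} yields the first equality.

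For the ``furthermore'' claim, suppose $T_s \in \bar{\mathcal{T}}_s(i:n)$ satisfies $\sig^{(i)}(T_s) = (n;0;0)$ (which is implicit in the corollary's hypothesis, since only such trees are referenced by $\OPT_s(n;0;0)$ via the previous lemma) and $\Cost_s^{(i)}(T_s) = \OPT_s(n;0;0)$. Then \Cref{lem:levelind}(b) places $T_s \in \mathcal{T}_s(n)$, and \Cref{lem:levelind}(a) gives $\Cost_s(T_s) = \Cost_s^{(i)}(T_s) = \OPT_s(n;0;0) = \min_{T \in \mathcal{T}_s(n)} \Cost_s(T:C)$. By the definition of $T_s(C)$ in \Cref{eq:argminT0,eq:argminT1}, this $T_s$ is a valid choice for $T_s(C)$.

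There is no real obstacle here: every implication is a direct reading of the already-proven lemmas. The only mildly delicate point is the bookkeeping between the three different quantities $\Cost_s(T_s)$, $\Cost_s^{(i)}(T_s)$, and $\OPT_s(\alpha)$, and checking that \Cref{lem:levelind} provides a cost-preserving bijection between complete trees in $\mathcal{T}_s(n)$ and partial trees in $\bar{\mathcal{T}}_s(i:n)$ with terminal signature $(n;0;0)$. Once this identification is made explicit, both parts of the corollary follow immediately.
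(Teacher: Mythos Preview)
Your proposal is correct and follows essentially the same route as the paper, which simply states that the corollary follows immediately from \Cref{cor:eqopt} together with the preceding lemma. You have merely made explicit the role of \Cref{lem:levelind} in identifying the two sets of trees being minimized over, which the paper leaves implicit.
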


In words, the Corollary states that $T_s(C)$ can be found by filling in the $\OPT_s(\,)$ table and then using the $\Pred_s(\,)$ entries to construct the tree corresponding to $\OPT_s(n;0;0).$ Since \Cref{sec:The DP} gives an $O(n^3)$ algorithm for filling in the $\OPT_s(\,)$ and $\Pred_s(\,)$ tables, this leads to the desired $O(n^3)$ algorithm for solving the original problem.

\medskip

\par\noindent{\bf References:}

% Bibliography. The file is main.bib
%\bibliographystyle{splncs03} 
\bibliographystyle{plain} 
\bibliography{biblio_v4}

%\bibliography{Evac,Evacuation_Problems}

\end{document}